\newtheorem{lemma}{Lemma}
\newtheorem{hyp}{Hypothesis}
\newtheorem{defn}{Definition}
\theoremstyle{remark}
\newtheorem{rem}{Remark}
\newcommand{\argmin}{\arg\!\min}
\newcommand{\tagarray}{%
\mbox{}\refstepcounter{equation}%
$(\theequation)$%
}
\begin{document}
\begin{frontmatter}
\title{Exact algorithms for the order picking problem}

\author[myu]{Lucie Pansart\corref{cor1}}
\ead{lucie.pansart@g-scop.grenoble-inp.fr}

\author[myu]{Nicolas Catusse}
\ead{nicolas.catusse@g-scop.grenoble-inp.fr}

\author[myu]{Hadrien Cambazard}
\ead{hadrien.cambazard@g-scop.grenoble-inp.fr}

\address[myu]{Univ.Grenoble Alpes, CNRS, G-SCOP, 38000 Grenoble, France}

\cortext[cor1]{Corresponding author}

\begin{abstract}
Order picking is the problem of collecting a set of products in a warehouse in a minimum amount of time. It is currently a major bottleneck in supply-chain because of its cost in time and labor force. 
%Optimal solutions can bring significant savings over traditional heuristics and.
%For the generic case of a rectangular and regular warehouse with any number of aisles and cross-aisles, the problem is $\mathcal{NP}$-hard.\\
This article presents two exact and effective algorithms for this problem. Firstly, a sparse formulation in mixed-integer programming is strengthened by preprocessing and valid inequalities. Secondly, a dynamic programming approach generalizing known algorithms for two or three cross-aisles is proposed and evaluated experimentally. Performances of these algorithms are reported and compared with the Traveling Salesman Problem (TSP) solver Concorde.
\end{abstract}
%and sub-optimal choices can turn out to be really expensive for warehousing professionals
\begin{keyword} 
integer programming\sep
order picking\sep
Steiner TSP\sep
TSP\sep
dynamic programming
\end{keyword}

\end{frontmatter}

\section{Introduction \label{sec:intro} } 

The order picking activity lies at the heart of logistic. It consists in collecting products from storage in a specific quantity given by a customer order. This process is often considered as the most important warehousing process since it is estimated that it accounts for 55\% of the total operational warehouse costs \cite{tompkins2010facilities}.
Many order picking systems are currently used in warehouses. The methods are classified following who picks the orders (humans or machines), who moves (picker or products) and the strategy used. We focus here on manual systems where the order picker moves in a regular rectangular warehouse.

An efficient way to optimize order picking in this case is to reduce picker travel time. Thus, we are concerned with the following issue: \textbf{how to optimize the routing time in the warehouse?}
This problem is called order picking problem, picker routing problem, or picking problem for sake of simplicity.
To the best of our knowledge, this problem cannot be solved in polynomial time, except when the number of cross-aisles is bounded \cite{roodbergen2001routingDP, cambazard2015fixed}.\\

We show in this paper that the problem can be efficiently solved optimally with mixed integer programming using a sparse formulation as well as adequate preprocessing and valid inequalities. Moreover, we extend a dynamic programming algorithm initially proposed by Ratliff and Rosenthal for 2 cross-aisles  \cite{ratliff1983order}, to any number of cross-aisles and evaluate it experimentally. It turns out to scale for a number of cross-aisles large enough to deal with real-life warehouses. This approach is however less suited to accommodate side-constraints.
% than dynamic programming or dedicated TSP solver such as Concorde.

This article begins with a description of the picking problem and a literature review (section \ref{sec:pb}). Section \ref{sec:pp} presents different ways of preprocessing the data to improve algorithms efficiency. Two exact algorithms are then presented in sections \ref{sec:milp} and \ref{sec:pdyn}. Finally, we report experimental results in section \ref{sec:results}.

\section{Problem description and literature review \label{sec:pb}}

We consider a regular rectangular warehouse with a single depot used to take the order and to drop it off. The warehouse is made of $v$  \textbf{vertical aisles} and $h$ \textbf{horizontal cross-aisles}. Products are located on both sides of vertical aisles. Cross-aisles do not contain any products but enable the order picker to navigate in the warehouse (see Figure \ref{fig:pb}). We make two common hypothesis on the warehouse:

\begin{hyp}
Aisles are narrow (the distance for crossing is considered null). \label{hyp:narrow}
\end{hyp}
\vspace{-2em}
\begin{hyp}
All aisle's lengths are equal. All cross-aisle's lengths are equal. \label{hyp:equi}
\end{hyp}

\begin{figure}[h]
\begin{adjustwidth}{-3cm}{-3cm}
\centering
\subfigure[Layout description of the considered warehouse ]{
\centering
\includegraphics[scale=0.58]{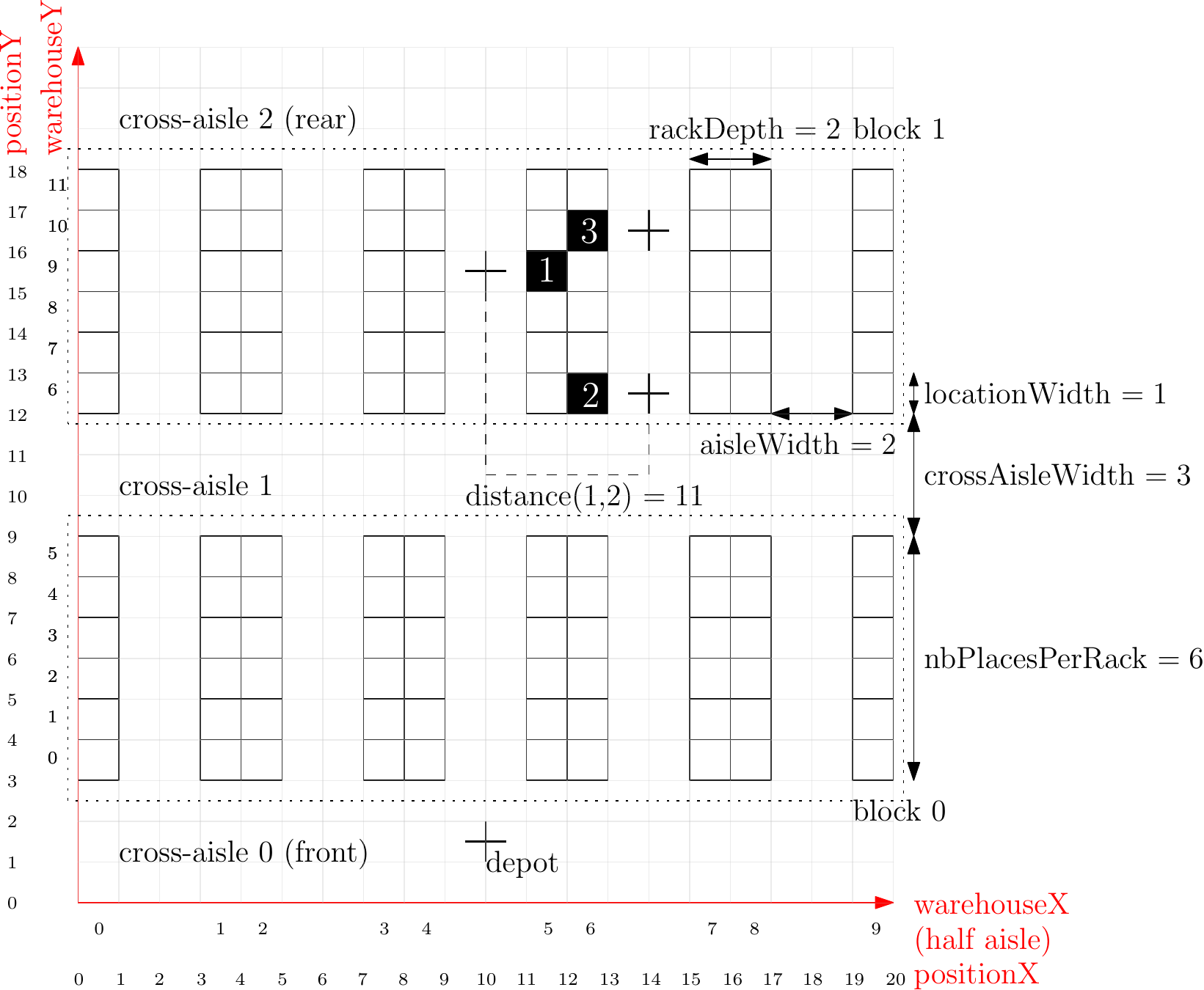}
 \label{fig:pb}}
\subfigure[Graph representation of the considered warehouse]{
\centering
\begin{tikzpicture}[thick,scale=0.75, every node/.style={scale=0.75}]
\node[opacity=0.4] (img) at (3,3.8){\includegraphics[scale=1.5]{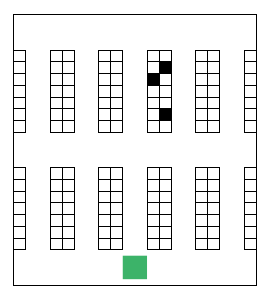}};

\tikzstyle{VR}=[circle,draw,fill=black,text=white,scale=0.8]
\tikzstyle{V}=[circle,draw,scale=0.8]
\tikzstyle{n}=[<->]%normal%
\tikzstyle{l}=[<->]%normal%

\node[V] (3)at(0,0){4};
\node[V] (4)at(1.5,0){5};
%\node[V] (5)at(4,0){5};
\node[V] (6)at(4.6,0){6};
\node[V] (7)at(6.2,0){7};
\node[V] (8)at(0,3.8){8};
\node[V] (9)at(1.5,3.8){9};
\node[V] (10)at(3,3.8){10};
\node[V] (11)at(4.6,3.8){11};
\node[V] (12)at(6.2,3.8){12};
\node[V] (13)at(0,7.4){13};
\node[V] (14)at(1.5,7.4){14};
\node[V] (15)at(3,7.4){15};
\node[V] (16)at(4.6,7.4){16};
\node[V] (17)at(6.2,7.4){17};
\node[VR] (d)at(3,0){d};
\node[VR] (1)at(3,6){1};
\node[VR] (2)at(4.6,4.9){2};
\node[VR] (p3)at(4.6,6.4){3};

\draw[n] (3) to (4);
\draw[n] (3) to (8);
\draw[n] (9) to (4);
\draw[n] (d) to (4);
\draw[n] (d) to (10);
\draw[n] (d) to (6);
\draw[n] (11) to (6);
\draw[n] (7) to (6);
\draw[n] (7) to (12);
\draw[n] (11) to (12);
\draw[n] (17) to (12);
\draw[n] (11) to (2);
\draw[n] (11) to (10);
\draw[n] (9) to (10);
\draw[n] (1) to (10);
\draw[n] (9) to (14);
\draw[n] (9) to (8);
\draw[n] (13) to (8);
\draw[n] (13) to (14);
\draw[n] (15) to (14);
\draw[n] (15) to (1);
\draw[n] (15) to (16);

\draw[n] (p3) to (16);
\draw[n] (2) to (p3);
\draw[n] (17) to (16);

\node (x) at (0,-1.7){};

\end{tikzpicture}
\label{fig:graph}}

\end{adjustwidth}
\caption{Example of a warehouse. A representation of the different parameters of a layout and the corresponding Steiner graph are overlaid on the warehouse structure.}
\end{figure}

An \textbf{order} is given by a picking list, \emph{i.e.} a set of $n$ products, indexed from $1$ to $n$ and described by their location in the warehouse. 
We define by $\mathbf{R}$ the indexes of all locations to visit (the products) including $0$ which is the index of the depot so $R = \{0,\ldots,n\}$. The set $\mathbf{I}$ denotes the indexes of intersections between aisles and cross-aisles of the warehouse, excluding the depot. The set of all relevant locations in the warehouse is therefore $\mathbf{V} = I \cup R$. The problem is stated as follows: given $n$ products to pick in a rectangular warehouse, what is the shortest tour (beginning and ending at the depot) to collect all these products?

It is a particular case of the Traveling Salesman Problem (TSP), the well-known $\mathcal{NP}$-hard \cite{karp1972reducibility} problem, where the salesman is the order picker and cities are products to collect. This problem, introduced by Dantzig, Fulkerson and Johnson in 1954 \cite{dantzig1954solution}, is one of the most studied in Operations Research. The survey of Orman and Williams \cite{orman2006survey} gives an overview of integer programming formulations for the TSP.
Efficient exact algorithms have been designed for the TSP and Concorde is one of the best exact solver (see Hahsler and Hornik \cite{hahsler2007tsp} or  Mulder and Wunsch \cite{mulder2003million}).
To solve the specific case of picking problem, many heuristics have been proposed in particular by Hall \cite{hall1993distance}. Some performance analysis of the most popular heuristics were made by Petersen \cite{petersen1997evaluation} and by Roodbergen and De Koster \cite{roodbergen2001routingH}. Theys, Br{\"a}ysy, Dullaert and Raa \cite{theys2010using} proposed to combine classical TSP heuristics with picking heuristics and provided a benchmark which is used in this work.

An exact approach using dynamic programming has been proposed for the first time by Ratliff and Rosenthal in the case of a single block (i.e., 2 cross-aisles) in 1983 \cite{ratliff1983order}. It was extended by Roodbergen and De Koster \cite{roodbergen2001routingDP} in the case of three cross-aisles. These algorithms are polynomial in the number of aisles and products of the warehouse   and used in heuristics for cases with more than three cross-aisles \cite{vaughan1999effect,roodbergen2001routingH}. 
Cambazard and Catusse developed a dynamic programming approach which can solve any rectilinear TSP \cite{cambazard2015fixed}. This algorithm can be applied to the picking problem for any rectangular warehouse with $h$ cross-aisles, but it is exponential in $h$. %%%%%%%%%%%%%%%%%%%%%%%%%%%%%%%%%%%%%%%%%%%%%%%%%%%%%%%%%%
 Although this extension was suggested in several articles \cite{roodbergen2001routingDP,ratliff1983order}, it has never been detailed nor implemented. The reason is that the extensive analysis made by \cite{ratliff1983order,roodbergen2001routingDP} of the possible states of the dynamic program does not easily generalize beyond three cross-aisles. We give in section \ref{sec:pdyn} a simple and general version of this exact algorithm for any number of cross-aisles to use it as a baseline in the experiments. \\ %\sout{A short description of this algorithm will be presented in section \ref{sec:pdyn} , since it is used as a baseline in the experiments}.\\

Practical comparisons have been performed between heuristics and exact algorithms, notably by De Koster and Van der Poort \cite{dekoster1998routing}. They compared the dynamic program with the commonly used S-shape heuristic and conclude that "the numerical results suggest that the savings in travel time may be substantial when using the optimal algorithm instead of the S-shape heuristic". This result strengthens the motivation for exact and efficient algorithms.

%In this article, we propose exact algorithms based on Mixed Integer Linear Programming (MILP) and on dynamic programming. 

The problem can be seen as a Steiner TSP \cite{scholz2016new} which is a variant of the TSP proposed independently by Fleischmann \cite{fleischmann1985cutting} and Cornuéjols, Fonlupt and Naddef in 1985 \cite{cornuejols1985traveling} although Orloff introduced the idea some years before \cite{orloff1974fundamental}. Burkard \textit{et al.} \cite{burkard1998well} categorized the picking problem applied to the case of series-parallel graph in well-solvable special cases of the TSP, which includes warehouses with 2 cross-aisles \cite{cornuejols1985traveling}.
Our Mixed Integer Linear Programming (MILP) model is based on the compact MILP formulations  proposed by Letchord, Nasiri and Theis \cite{letchford2013compact}.\\

The Steiner TSP (also known as subset TSP) is stated in a directed graph $\mathbf{D=(V,A)}$ (see Figure \ref{fig:graph}) where $V$ is the set of vertices and $A$ is the set of arcs, defined as follows:
$\forall i,j \in V, \ (ij) \in A \Leftrightarrow $ one of the following conditions holds:
\begin{enumerate}
\item $i$ and $j$ are horizontally adjacent intersections (e.g. 4 and 5 in Figure \ref{fig:graph}).
\item $i$ and $j$ are extreme intersections of an empty sub-aisle (e.g. 4 and 8).
\item $j$ is an extreme products and $i$ the adjacent intersection (e.g. 3 and 16).
\item $i$ and $j$ are adjacent products (e.g. 2 and 3).
\end{enumerate}

Each arc $(ij)$ is weighted by  $d_{ij}$, the distance between $i$ and $j$ in the warehouse. We study here the symmetric TSP, which means the graph D is such that $\forall (ij) \in A : (ji) \in A$ and $d_{ij} = d_{ji}$. We generalize this distance by defining the function $d : V \times V \rightarrow \mathbb{N}$ as follows: $d(i,j)$ is the distance of a shortest path between vertices $i$ and $j$. 

We define by $P_{ij}$ the set of all shortest paths between the two products $i$ and $j$. More formally, $P_{ij} = \{ P=(v_0v_1v_2...v_K) | v_0 = i, \, v_K =j, \,(v_{k-1}v_{k})\in A$ \\ $ \forall k=1...K  \text{ and } \sum\limits_{k=1}^K  d_{v_{k-1}v_{k}} = d(i,j)$\}.
%For sake of simplicity, if $(uv)$ is an arc contained in at least one shortest path between $i$ and $j$, we write $(uv) \in P_{ij}$ (instead of $\exists P\in P_{ij} | (uv)\in P$).
The set of \textbf{neighbors} of a vertex $i$ is denoted $\Gamma(i)$ so that $\Gamma(i)=\{ j \in V | (ij)  \in A\}$. Note that we don't need to distinguish the successeurs and predecessors of a vertex since they are the same by construction of $D$.
\section{Preprocessing \label{sec:pp}}

The size of the problem can be considerably reduced, which is a good lever to reduce computation time. First, we can reduce the size of the picking list, i.e., the number of vertices in $R$. Then, we show how to reduce the number of arcs in the graph, keeping a sufficient set of arcs to find an optimal solution. Figure \ref{fig:spanner} shows an example of the impact of the preprocessing.

\begin{figure}
\centering
\includegraphics[scale=0.15]{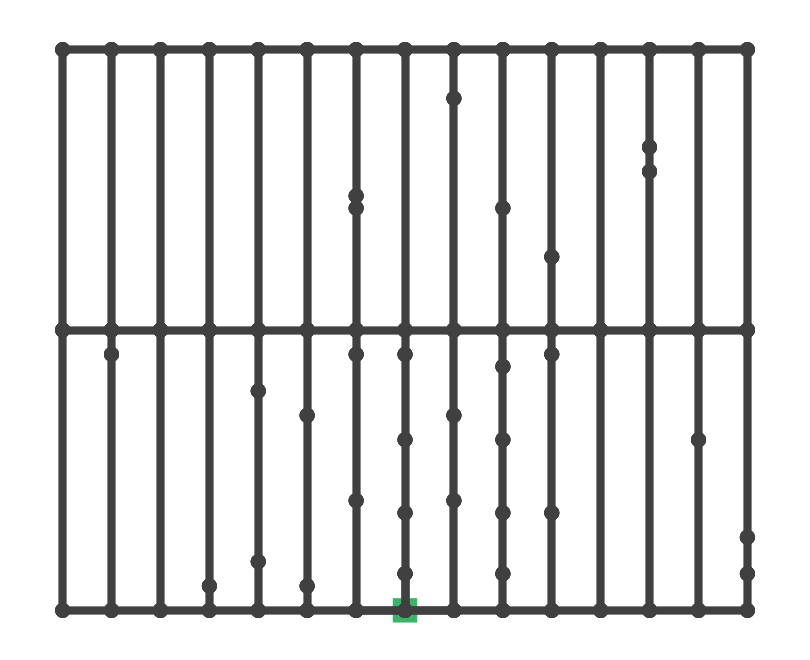}
\includegraphics[scale=0.15]{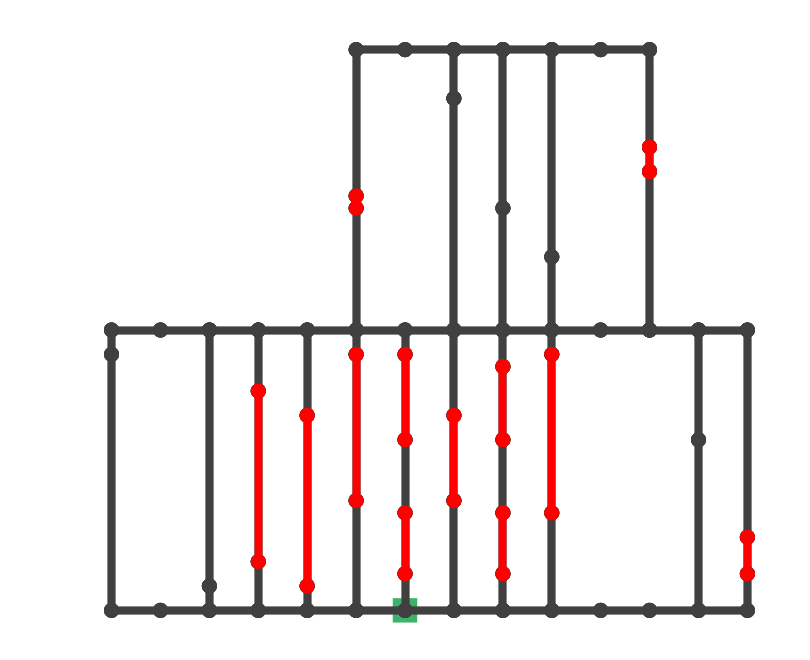}
\caption{Example of an instance and its preprocessing. Red edges show edges forced by the node preprocessing. The instance has 105 vertices and 270 arcs while the preprocessed instance has 65 vertices and 156 arcs.\label{fig:spanner}}
\end{figure}

\subsection{Reducing the number of vertices}

In this part, we describe two preprocessings reducing the number of products without changing the problem. The first algorithm must be used only when the solver accepts additional constraints while the second one is general.

Both algorithms are based on the following result due to Ratliff and Rosenthal: in an optimal solution, there exists six unique ways to traverse a sub-aisle (shown in Figure \ref{fig:6ways}) \cite{ratliff1983order}.
Case \hyperref[fig:caseE]{(e)} is the only configuration that might not be unique. However, only configurations where the gap between black and white groups is the largest will occur in an optimal solution. The largest gap of a sub-aisle is defined as the longest empty distance between two vertices of a sub-aisle. Several configurations can still exist if the largest gap is not unique in the sub-aisle, but they will all have the same cost in an optimal solution.

\begin{figure}
\centering
\subfigure[]{
\begin{tikzpicture}[thick,scale=0.5, every node/.style={transform shape}]
\tikzstyle{VR}=[circle,draw,color=black,scale=1.5]
\tikzstyle{VT}=[circle,draw,fill=black,scale=1.5]
\tikzstyle{V}=[circle,draw]
\tikzstyle{n}=[-]%normal%
\tikzstyle{d}=[<-]%normal%

\node[V] (i0)at(0,0){i0};
\node[V] (i1)at(0,9){i1};

\node[VT] (p1)at(0,1){};
\node[VT] (p7)at(0,1.75){};
\node[VT] (p2)at(0,2.5){};
\node[VR] (p3)at(0,5){};
\node[VR] (p6)at(0,5.5){};
\node[VR] (p4)at(0,6.5){};
\node[VR] (p5)at(0,8){};

\draw[n] (i0)--(p1);
\draw[n] (p1)--(p7);
\draw[n] (p7)--(p2);
\draw[n] (p2)--(p3);
\draw[n] (p3)--(p6);
\draw[n] (p6)--(p4);
\draw[n] (p4)--(p5);
\draw[d] (i1)--(p5);

\node at (0,-1){};
\end{tikzpicture}
\label{fig:caseA}}
\subfigure[]{
\begin{tikzpicture}[thick,scale=0.5, every node/.style={transform shape}]
\tikzstyle{VR}=[circle,draw,color=black,scale=1.5]
\tikzstyle{VT}=[circle,draw,fill=black,scale=1.5]
\tikzstyle{V}=[circle,draw]
\tikzstyle{n}=[-]%normal%
\tikzstyle{d}=[<-]%normal%

\node[V] (i0)at(0,0){i0};
\node[V] (i1)at(0,9){i1};

\node[VT] (p1)at(0,1){};
\node[VT] (p7)at(0,1.75){};
\node[VT] (p2)at(0,2.5){};
\node[VR] (p3)at(0,5){};
\node[VR] (p6)at(0,5.5){};
\node[VR] (p4)at(0,6.5){};
\node[VR] (p5)at(0,8){};

\draw[d] (i0)--(p1);
\draw[n] (p1)--(p7);
\draw[n] (p7)--(p2);
\draw[n] (p2)--(p3);
\draw[n] (p3)--(p6);
\draw[n] (p6)--(p4);
\draw[n] (p4)--(p5);
\draw[n] (i1)--(p5);

\node at (0,-1){};
\end{tikzpicture}
\label{fig:caseB}}
\subfigure[]{
\begin{tikzpicture}[thick,scale=0.5, every node/.style={transform shape}]
\tikzstyle{VR}=[circle,draw,color=black,scale=1.5]
\tikzstyle{VT}=[circle,draw,fill=black,scale=1.5]
\tikzstyle{V}=[circle,draw]
\tikzstyle{I}=[circle,scale=1.4]
\tikzstyle{n}=[-]%normal%
\tikzstyle{d} = [<-]
\tikzstyle{e} = [->]

\node[V] (i0)at(0,0){i0};
\node[V] (i1)at(0,9){i1};
\node[I] (ii0)at(0,0.2){};
\node[I] (ii1)at(0,8.8){};

\node[VT] (p1)at(0,1){};
\node[VT] (p7)at(0,1.75){};
\node[VT] (p2)at(0,2.5){};
\node[VR] (p3)at(0,5){};
\node[VR] (p6)at(0,5.5){};
\node[VR] (p4)at(0,6.5){};
\node[VR] (p5)at(0,8){};

\draw[e] (p2.east)--(p1.east);
\draw[n] (p2.east)--(p3.east);
\draw[n] (p4.east)--(p3.east);
\draw[n] (p4.east)--(p5.east);
\draw[n] (ii1.east)--(p5.east);

\draw[n] (p2.west)--(p1.west);
\draw[n] (p2.west)--(p3.west);
\draw[n] (p4.west)--(p3.west);
\draw[n] (p4.west)--(p5.west);
\draw[d] (ii1.west)--(p5.west);

\node at (0,-1){};
\end{tikzpicture}
\label{fig:caseC}}
\subfigure[]{
\begin{tikzpicture}[thick,scale=0.5, every node/.style={transform shape}]
\tikzstyle{VR}=[circle,draw,color=black,scale=1.5]
\tikzstyle{VT}=[circle,draw,fill=black,scale=1.5]
\tikzstyle{V}=[circle,draw]
\tikzstyle{I}=[circle,scale=1.4]
\tikzstyle{n}=[-]%normal%
\tikzstyle{d} = [<-]
\tikzstyle{e} = [->]

\node[V] (i0)at(0,0){i0};
\node[V] (i1)at(0,9){i1};
\node[I] (ii0)at(0,0.2){};
\node[I] (ii1)at(0,8.8){};

\node[VT] (p1)at(0,1){};
\node[VT] (p7)at(0,1.75){};
\node[VT] (p2)at(0,2.5){};
\node[VR] (p3)at(0,5){};
\node[VR] (p6)at(0,5.5){};
\node[VR] (p4)at(0,6.5){};
\node[VR] (p5)at(0,8){};

\draw[d] (ii0.east)--(p1.east);
\draw[n] (p2.east)--(p1.east);
\draw[n] (p2.east)--(p3.east);
\draw[n] (p4.east)--(p3.east);
\draw[n] (p4.east)--(p5.east);

\draw[n] (ii0.west)--(p1.west);
\draw[n] (p2.west)--(p1.west);
\draw[n] (p2.west)--(p3.west);
\draw[n] (p4.west)--(p3.west);
\draw[e] (p4.west)--(p5.west);

\node at (0,-1){};
\end{tikzpicture}
\label{fig:caseD}}
\subfigure[]{
\begin{tikzpicture}[thick,scale=0.5, every node/.style={transform shape}]
\tikzstyle{VT}=[circle,draw,fill=black,scale=1.5]
\tikzstyle{VR}=[circle,draw,color=black,scale=1.5]
\tikzstyle{V}=[circle,draw]
\tikzstyle{I}=[circle,scale=1.4]
\tikzstyle{n}=[-]%normal%
\tikzstyle{d} = [<-]
\tikzstyle{e} = [->]

\node[V] (i0)at(0,0){i0};
\node[V] (i1)at(0,9){i1};
\node[I] (ii0)at(0,0.2){};
\node[I] (ii1)at(0,8.8){};

\node[VT] (p1)at(0,1){};
\node[VT] (p7)at(0,1.75){};
\node[VT] (p2)at(0,2.5){};
\node[VR] (p3)at(0,5){};
\node[VR] (p6)at(0,5.5){};
\node[VR] (p4)at(0,6.5){};
\node[VR] (p5)at(0,8){};

\draw[d] (ii0.east)--(p1.east);
\draw[n] (p2.east)--(p1.east);

\draw[n] (p4.east)--(p3.east);
\draw[n] (p4.east)--(p5.east);
\draw[d] (ii1.east)--(p5.east);

\draw[n] (ii0.west)--(p1.west);
\draw[d] (p2.west)--(p1.west);
\draw[e] (p4.west)--(p3.west);

\draw[n] (p4.west)--(p5.west);
\draw[n] (ii1.west)--(p5.west);

\node at (0,-1){};
\end{tikzpicture}
\label{fig:caseE}}
\subfigure[]{
\begin{tikzpicture}[thick,scale=0.5, every node/.style={transform shape}]
\tikzstyle{VT}=[circle,draw,fill=black,scale=1.5]
\tikzstyle{VR}=[circle,draw,color=black,scale=1.5]
\tikzstyle{V}=[circle,draw]
\tikzstyle{I}=[circle,scale=1.4]
\tikzstyle{n}=[-]%normal%
\tikzstyle{d} = [<-]
\tikzstyle{e} = [->]

\node[V] (i0)at(0,0){i0};
\node[V] (i1)at(0,9){i1};
\node[I] (ii0)at(0,0.2){};
\node[I] (ii1)at(0,8.8){};

\node[VT] (p1)at(0,1){};
\node[VT] (p7)at(0,1.75){};
\node[VT] (p2)at(0,2.5){};
\node[VR] (p3)at(0,5){};
\node[VR] (p6)at(0,5.5){};
\node[VR] (p4)at(0,6.5){};
\node[VR] (p5)at(0,8){};

\draw[d] (ii0.east)--(p1.east);
\draw[n] (p2.east)--(p1.east);
\draw[n] (p2.east)--(p3.east);
\draw[n] (p4.east)--(p3.east);
\draw[n] (p4.east)--(p5.east);
\draw[n] (ii1.east)--(p5.east);

\draw[n] (ii0.west)--(p1.west);
\draw[n] (p2.west)--(p1.west);
\draw[n] (p4.west)--(p3.west);
\draw[n] (p2.west)--(p3.west);
\draw[n] (p4.west)--(p5.west);
\draw[d] (ii1.west)--(p5.west);

\node at (0,-1){};
\end{tikzpicture}
\label{fig:caseF}}
\subfigure[]{
\begin{tikzpicture}[thick,scale=0.5, every node/.style={transform shape}]

\tikzstyle{VR}=[circle,draw,color=black,scale=1.5]
\tikzstyle{VT}=[circle,draw,fill=black,scale=1.5]
\tikzstyle{NR}=[circle,draw,dotted,color=black,scale=1.5]
\tikzstyle{NT}=[circle,draw,dotted,fill=black!20,scale=1.5]
\tikzstyle{V}=[circle,draw]
\tikzstyle{n}=[-]%normal%

\node[V] (i0)at(0,0){i0};
\node[V] (i1)at(0,9){i1};

\node[VT] (p1)at(0,1){} ;
\node at (0.5,1) {$b_S$};

\node[NT] (p7)at(0,1.75){};
\node[VT] (p2)at(0,2.5){};
\node at (0.5,2.5) {$t_S$};

\node[VR] (p3)at(0,5){};
\node at (0.5,5) {$b_T$};
\node[NR] (p6)at(0,5.5){};
\node[NR] (p4)at(0,6.5){};
\node[VR] (p5)at(0,8){};
\node at (0.5,8) {$t_T$};
\node at (-4,0){};
\node at (4,0){};

\draw[n] (p1)--(p2);

\draw[n] (p3)--(p5);

\draw[n] (-2,-1)--(-2,10);
\draw[n] (2,-1)--(2,10);
\end{tikzpicture}
\label{fig:6waysPP}}
\subfigure[]{
\begin{tikzpicture}[thick,scale=0.5, every node/.style={transform shape}]

\tikzstyle{VR}=[circle,draw,color=black,scale=1.5]
\tikzstyle{VT}=[circle,draw,fill=black,scale=1.5]
\tikzstyle{NR}=[circle,draw,dotted,color=black,scale=1.5]
\tikzstyle{NT}=[circle,draw,dotted,fill=black!20,scale=1.5]
\tikzstyle{V}=[circle,draw]
\tikzstyle{n}=[-]%normal%

\node[V] (i0)at(0,0){i0};
\node[V] (i1)at(0,9){i1};

\node[VT] (p1)at(0,1){} ;

\node[NT] (p7)at(0,1.75){};
\node[VT] (p2)at(0,2.5){};

\node[VR] (p3)at(0,5){};

\node[NR] (p6)at(0,5.5){};
\node[VR] (p4)at(0,6.5){};
\node[VR] (p5)at(0,8){};

\node at (0,-1){};
\end{tikzpicture}
\label{fig:6waysPPC}}
\caption{To the left: The six ways to traverse a sub-aisle. We distinguish two groups (black and white vertices) where all vertices of one group are taken in one go. In the middle: the result after preprocessing \ref{sec:ppbasic}. Dashed vertices are removed from the problem and lines show mandatory paths.
To the right: the result after preprocessing \ref{sec:ppconcorde} \label{fig:6ways}}
\end{figure}
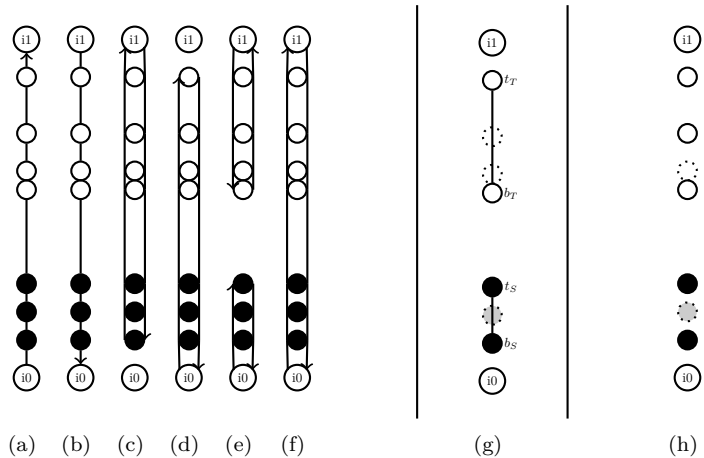

\subsubsection{Additional constraints available \label{sec:ppbasic}}

In any case, the black vertices are all picked in one go (one after the other) and the white vertices as well. So, for both subsets, we can keep only extreme products and impose an arc to be taken between these two extreme products.
%TODO confirmer ça

\begin{defn}[Preprocessing]
The vertex preprocessing is defined by the following algorithm: \\
\begin{algorithm}[H]
\For{ every sub-aisle} {
\item[-] Compute a largest gap between two vertices
\item[-] Identify the set S (resp. T) containing all products below (resp. above) the largest gap
\item[-] In each subset (S and T), keep the two products $t_S$ and $b_S$ (resp. $t_T$ and $b_T$) that are the farthest apart
\item[-] Add the constraints forcing the order picker to traverse each set at least once. This means the order picker must traverse arc $(t_S b_S)$ or $(b_S t_S)$ and similarly arc  $(t_T b_T)$ or $(b_T t_T)$. \protect\footnotemark}
\end{algorithm} 
\end{defn}
\footnotetext{In the MILP (see section \ref{sec:milp}) it translates into the following constraints:\\
$$ x_{t_S b_S} + x_{b_S t_S} \geq 1 \text{ if } t_S \neq b_S \text{ and } x_{t_T b_T} + x_{b_T t_T} \geq 1 \text{ if } t_T \neq b_T \label{eq:cstrs} $$}

Note that $S$ and $T$ can be empty or singletons ($t_S = b_S$). Figure \ref{fig:6waysPP} shows the result after preprocessing the sub-aisle depicted. The vertices of set $S$ are drawn in black whereas the vertices of $T$ are white. Note that the six configurations have either $(t_S b_S)$ or $(b_S t_S)$  (or both) on one side and either $(t_T b_T)$ or $(b_T t_T)$ (or both) on the other side, hence the constraints added by the preprocessing.
\begin{rem}
The preprocessing does not change the value of the optimal solution and leaves \textbf{at most 4 products} by sub-aisle.
\end{rem}

A similar preprocessing was proposed independently by Scholz \textit{et al.} \cite{scholz2016new}, using different constraints to force all the products of one set (S or T) to be picked together.

\subsubsection{With a distance matrix \label{sec:ppconcorde}}

When it is not allowed to add constraints, for example when using Concorde, we can still suppress some vertices. Indeed, the constraints \eqref{eq:cstrs} added by the preprocessing are required only if the removal of vertices changes the position of the largest gap in the sub-aisle. Thus, we can still suppress vertices as long as the largest gap remains between $b_T$ and $t_S$ (see Figure \ref{fig:6waysPPC}).
%To suppress the maximum number of vertices, we keep the vertices the farthest from the extreme products of the sets S and T and suppress the others as long as the largest gap remains unchanged .

\subsection{Reducing the number of arcs}

Let's now focus on the arcs. We compute the minimum 1-spanner in terms of number of edges, i.e., we are looking for a subset of the arcs preserving at least one original shortest path between any pairs of required vertices. The rationale is that any solution of the picking problem gives a tour where two products picked successively are linked by a shortest path. So any optimal solution can be found in a 1-spanner of $G$.

\begin{defn}
 A \textbf{k-spanner} of a graph $G$ is a sub-graph $H \subset G$ such that: 
\begin{itemize}[noitemsep,topsep=0pt]
\setlength\itemsep{1pt}
\item $H$ contains all the vertices of $G$. 
\item The distance between each pair of vertices in $H$ is at most $k$ times their distance in $G$.
\end{itemize}
\end{defn}

In our case, we search for a minimum "1-spanner" in the Steiner graph, restricted to the required vertices, which can also be seen as a particular case of the Minimum Manhattan Network. Let's consider the undirected graph $G = (V = I \cup R , E)$ where $E$ is defined as $A$ by removing orientation. We search for a subgraph of $G$ such that between each pair $(i,j)$ of required vertices there exists a shortest path which keeps the initial distance $d(i,j)$. Namely we search a graph $H = (V_H, E_H) $ such that: $R \subset  V_H  $, $V_H \subset V$, $E_H \subset E$ and  $\forall i,j \in V_H : \exists $ a $(i,j)$-path $P \in H |  \sum\limits_{(uv)\in P}  d_{uv} = d(i,j)$.\\

To compute a minimum 1-spanner, we choose the minimum set of edges with respect to the preceding properties.
 We use an integer linear program to solve the problem of the \textbf{1-spanner}, based on a model for the Minimum Manhattan Network from Benkert \textit{et al.} (\cite{benkert2006minimum}). The resulting set of selected edges $E^*$ forms the 1-spanner which can be used to state the Steiner TSP.
 
The Minimum Manhattan Network is $\mathcal{NP}$-hard \cite{chin2011minimum} but its resolution with MILP turns out to be really fast in practice for the benchmark considered. Finally, note that we don't need to compute the optimal network and any feasible 1-spanner can be used for preprocessing. %Figure \ref{fig:spanner} gives an overview of the preprocessing performed.

%It can be sped up further by identifying obvious edges that must appear in the 1-spanner and edges that are not necessary. Moreover, we don't need to compute the optimal network since it is just a preprocessing to reduce the number of edges. Actually, any feasible solution can be used for preprocessing edges.\\

  % To be more efficient, we can remove the edges which don't belong to any shortest path between required vertices and compute the minimum 1-spanner on the retained edges. \\
%  The computation of this 1-spanner can be sped up. Indeed, in the previous treatment, a shortest path is maintained between each pair of required vertices. Actually, it is possible to maintain a shortest path only between pair of required vertices which are "adjacent", i.e., none shortest path between these vertices contain any other required vertex. Formally, we have to preserve a shortest path between $i$ and $j \in R$ only if $P_{ij} \cap R \setminus \{i,j\} = \emptyset$.
%Mandatory edges are detected when the shortest path is unique between two vertices. Formally, if $|P_{ij}|=1$ for $i,j \in R$ then every edge of $P_{ij}$ is kept in the 1-spanner. \\

\section{MILP formulation \label{sec:milp}}

The Steiner variant of the TSP was proposed by Cornuéjols, Fonlupt and Naddef in 1985 \cite{cornuejols1985traveling}. It was introduced especially to solve problems where the graph is sparse. The principle is that the graph contains some \textbf{required vertices} which \textbf{must} be visited and some \textbf{Steiner vertices} which \textbf{can} be visited. Moreover, in a Steiner Traveling Salesman Problem (STSP), the graph is not complete and edges as well as \textbf{vertices can be visited more than once}. In this section, we apply a Steiner approach to the picking problem.

We define the Steiner graph from the directed graph modeling an instance (see Figure \ref{fig:graph}): the required vertices are the products plus the depot and the Steiner vertices are the intersections in the warehouse. We look for a shortest tour in this graph, going through all the required vertices \textbf{at least} once.

\subsection{Flow-based formulation \label{ssec:scfs}}

We can use the compact single commodity flow formulation proposed by Letchford, Nasiri and Theis \cite{letchford2013compact}. It follows the flow principle: the order picker leaves the depot with $n$ units of a commodity and delivers one unit each time he picks a product. \\
 
We define the variables: $\forall (ij) \in A $ \\
$ x_{ij} = \left\{ 
\begin{aligned} &1 \text{ if the tour uses the arc $(ij)$} \\ 
&0 \text{ otherwise} \\
\end{aligned} \right. $\\
$ y_{ij} =  $ amount of commodity passing through arc $(ij)$.\\

The solution of the STSP is then found by solving the following mixed-integer linear program further called \textbf{SCFS}, standing for Single Commodity Flow Formulation for a Steiner TSP: \label{SCFS}\\

\hspace{-1cm}
\begin{tabularx}{\linewidth}{crllcr}
\multirow{8}*{$(SCFS)$} & $z^* = min \quad $& $\sum\limits_{(ij)\in A} d_{ij}x_{ij} $& &\tagarray\label{eq:SCFS_obj}  & \\
& s.t. & $\sum\limits_{j\in \Gamma(i)} x_{ij} \geq 1$ & $\forall i \in R $ & \tagarray\label{eq:SCFS_ass1}& %Assignment constraints
 \\
 && $\sum\limits_{j\in \Gamma(i)} x_{ij} = \sum\limits_{j\in \Gamma(i)} x_{ji} \quad$  &$\forall i \in V $ &  \tagarray\label{eq:SCFS_ass2} &% Conservation 
 \\
 && $\sum\limits_{j\in \Gamma(i)} y_{ji} - \sum\limits_{j\in \Gamma(i)} y_{ij} = 1 \quad$ &$\forall i \in R\setminus\{0\}  $ &  \tagarray\label{eq:SCFS_flowR} & %Flow delivery
 \\
 && $\sum\limits_{j\in \Gamma(i)} y_{ji} - \sum\limits_{j\in \Gamma(i)} y_{ij} = 0 \quad$ &$\forall i \in V\setminus\{R\} $ &   \tagarray\label{eq:SCFS_flowS} & %Flow conservation 
 \\ 
 &&$  y_{ij} \leq n x_{ij}$ & $\forall (ij) \in A  $ & \tagarray\label{eq:SCFS_ybound}& %Bound on commodity amount
 \\
&& $x_{ij} \in \mathbb{N}$ &$\forall (ij) \in A  $ & \tagarray\label{eq:SCFS_xdomain}& \\
&& $y_{ij} \geq 0$ &$\forall (ij) \in A  $ & \tagarray\label{eq:SCFS_ydomain}& 
\end{tabularx}
$ $ \\

Constraints \eqref{eq:SCFS_ass1} ensure that each required vertex is visited \textbf{at least} once. Constraints \eqref{eq:SCFS_ass2} ensure that the tour arrives in any vertex as many times as it leaves it. 
The flow constraints are different depending on whether a vertex is required or not: constraints \eqref{eq:SCFS_flowR} impose that the order picker delivers one unit of the commodity to each product while constraints \eqref{eq:SCFS_flowS} impose that the flow stays the same through a non-required vertex.
Constraints \eqref{eq:SCFS_ybound} link the $y$ to the $x$ variables so that if some flow transits through $(ij)$ then the arc $(ij)$ is chosen.

\begin{rem}
The variables $y$ are real variables (8) but the optimal solution will be integer due to the fact that $y$ represent a flow.
\end{rem}
\begin{rem}
	We know from Lemma 1 in Letchford, Nasiri and Theis \cite{letchford2013compact} that every optimal solution of the STSP uses an arc at most once and thus satisfies $x_{ij}\leq 1$, $\forall (ij) \in A $. It is therefore sufficient to define $x$ as a positive integer (constraints \eqref{eq:SCFS_xdomain}). So the linear relaxation amounts to $x_{ij} \geq 0$.
\end{rem}

This formulation is compact and really sparse, especially thanks to the preprocessings described above. However, we notice that the quality of the lower bound given by the linear relaxation is weaker than the one given by a more standard formulation as it is explained below.

\subsection{Theoretical study of the formulation}

In this section, we compare the formulation described above with a standard flow-based formulation for the TSP. We consider the problem of finding a shortest tour in the complete and directed graph composed of all the products and where the distance between two vertices is the shortest distance in the warehouse. For sake of simplicity, we use the directed version of the TSP. Thus, we compare the standard single-commodity flow formulation for the TSP (as defined by Gavish \& Graves \cite{gavish1978travelling}) with the single-commodity flow formulation described above for the Steiner TSP. 

We recall the Gavish and Graves formulation (denoted by \textbf{SCF}), adapted to a directed graph:

\begin{tabularx}{\linewidth}{crllcr}
\multirow{8}*{$(SCF)$} & $z'^* = min \quad $& $\sum\limits_{\substack{i,j\in R\\i\neq j}} d(i,j)x'_{ij}$& &\tagarray\label{eq:SCF_obj}  & \\
& s.t. & $\sum\limits_{\substack{j\in R\\j\neq i}} x'_{ij} = 1$ & $\forall i \in R $ & \tagarray\label{eq:SCF_ass1}&% \multirow{2}*{$ \left. \begin{aligned}
%\\
%\\
%\end{aligned} \right\} \quad \quad$ Assignment constraints} 
\\
 && $\sum\limits_{\substack{j\in R\\j\neq i}} x'_{ji} = 1$  &$\forall i \in R $ &  \tagarray\label{eq:SCF_ass2} &  \\
 && $\sum\limits_{\substack{j\in R\\j\neq i}} y'_{ji} - \sum\limits_{\substack{j\in R\\j\neq i}}y'_{ij} = 1 \quad$ &$\forall i \in R\setminus\{0\}  $ &  \tagarray\label{eq:SCF_flow} & %Flow delivery
 \\ 
 &&$  y'_{ij} \leq n x'_{ij}$ & $\forall i,j \in R, \ i\neq j  $ & \tagarray\label{eq:SCF_ybound}& %Bound on commodity amount
 \\
&& $x'_{ij} \in \{0,1\}$ &$\forall i,j \in R, \ i\neq j  $ & \tagarray\label{eq:SCF_xdomain}& \\
&& $y'_{ij} \geq 0$ &$\forall i,j \in R , \ i\neq j $ & \tagarray\label{eq:SCF_ydomain}& \\
\end{tabularx}
$ $ \\

Constraints \eqref{eq:SCF_ass1} and \eqref{eq:SCF_ass2} are usual assignment constraints ensuring that each vertex is visited \textbf{exactly} once. 
Constraints \eqref{eq:SCF_flow} ensure that, except for the depot, the salesman deliver one unit at each vertex and retains the rest of the flow.
Constraints \eqref{eq:SCF_ybound} are \textit{Big-M} constraints linking $y$ and $x$. % the amount of commodity. %living the depot (vertex $0$) the order picker carries at most $n$ units of commodity.  %Leaving any other vertex, he delivered at least one unit so he carries at most $n-1$ units of commodity. This strengthening of the bound was proposed by Orman and Williams \cite{orman2006survey} \\
Finally, the objective \eqref{eq:SCF_obj} is to minimize the total distance of the tour.

To compare both formulations, we define a projection $proj$ which projects a fractional solution of (SCF) in the space of (SCFS) by keeping the distance value. The idea is to divide the value $x'$ between two required vertices on all the shortest paths.

\begin{defn}[Projection $proj$]
	\begin{equation}
		\begin{aligned}
			proj: &   & \mathbb{R}^{|R|^2} \rightarrow & \mathbb{R}^{|A|} &   \\
			&   & x' \rightarrow                 & x                
		\end{aligned}
	\end{equation}
				
	Where $proj (x')$ is defined by: 
	\begin{equation}
		x_{uv} = \sum\limits_{i,j\in R}\sum\limits_{\substack{P\in P_{ij} \\  (uv) \in P}} \frac{x'_{ij}}{|P_{ij}|} \ \forall (uv) \in A
	\end{equation}
\end{defn}

By construction, the conservation is checked at each vertex (property \eqref{eq:P1_2}) and each required vertex is "visited" at least as many times as in the standard TSP solution (property \eqref{eq:P1_1}). A required vertex can be visited more if it lies on a shortest path between two other required vertices. 
%TODO
\begin{rem}[Properties of $proj$]
	Let $x' \in \mathbb{R}^{|R|^2}$ and $x = proj(x')$. Then:  
	%\begin{enumerate}[(i)]
		%\item
		      \begin{equation} 
		      (i) \:\: \sum\limits_{j\in \Gamma(i)}x_{ij} \geq \sum\limits_{j\in R} x'_{ij} \ \forall i \in R \ \label{eq:P1_1}
		      \end{equation}
		      		      		      		      
		      %: the vector is divided on all the shortest paths between $i$ and $j$ so it leaves entirely the vertex $i$ 
		%\item 
		\begin{equation} 
		(ii) \:\:  \sum\limits_{j\in \Gamma(i)}x_{ij} = \sum\limits_{j\in \Gamma(i)}x_{ji} \ \forall i \in V \label{eq:P1_2}
		\end{equation}
		
		\begin{equation*}
		 (iii) \:\: \sum\limits_{(ij)\in A}x_{ij}d_{ij} = \sum\limits_{i,j\in R}x'_{ij}d(i,j)
		\end{equation*}
	%\end{enumerate}
\end{rem}

\begin{lemma}
	The linear relaxation of the Steiner single commodity flow formulation (SCFS) is \textbf{weaker} than the linear relaxation of the standard TSP single commodity flow formulation (SCF).
\end{lemma}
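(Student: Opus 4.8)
The plan is to show that every feasible point of the linear relaxation of (SCF) can be mapped, through the projection $proj$, to a feasible point of the linear relaxation of (SCFS) having the \emph{same} objective value. Since (SCFS) is a minimization problem, exhibiting such a cost-preserving map from the feasible region of relaxed (SCF) into that of relaxed (SCFS) immediately gives $z^* \leq z'^*$ for the two relaxations, which is precisely the assertion that (SCFS) is weaker. Concretely, I would start from an optimal fractional solution $(x',y')$ of the relaxed (SCF), set $x = proj(x')$, and build a companion flow $y$ by the same averaging over shortest paths.

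The constraints on the $x$-variables follow almost for free from the stated properties of $proj$. Property \eqref{eq:P1_1} combined with the assignment equality \eqref{eq:SCF_ass1} gives $\sum_{j\in\Gamma(i)} x_{ij} \geq \sum_{j\in R} x'_{ij} = 1$ for every $i\in R$, which is exactly \eqref{eq:SCFS_ass1}; property \eqref{eq:P1_2} is the degree-balance constraint \eqref{eq:SCFS_ass2}; and property (iii) shows the objective is preserved, $\sum_{(ij)\in A} d_{ij}x_{ij} = \sum_{i,j\in R} d(i,j)x'_{ij}$. Non-negativity $x\geq 0$ is immediate since $x'\geq 0$.

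The real work is constructing $y$ and checking the flow constraints. I would define $y_{uv} = \sum_{i,j\in R}\sum_{P\in P_{ij},\,(uv)\in P} y'_{ij}/|P_{ij}|$, i.e.\ route each (SCF) commodity value $y'_{ij}$ equally along all shortest $i$--$j$ paths. The coupling \eqref{eq:SCFS_ybound} then follows term by term from \eqref{eq:SCF_ybound}: since $y'_{ij}\leq n\,x'_{ij}$, each summand of $y_{uv}$ is bounded by $n$ times the matching summand of $x_{uv}$, so $y_{uv}\leq n\,x_{uv}$, and $y\geq 0$ is clear. For flow conservation I would argue path by path. Because all arc weights are positive, every shortest path is simple, so a fixed path $P\in P_{ij}$ meets any vertex $w$ at most once: it contributes nothing to the net balance at $w$ when $w$ is interior to $P$ (one entering and one leaving arc cancel), it contributes $+y'_{ij}/|P_{ij}|$ when $w=j$, and $-y'_{ij}/|P_{ij}|$ when $w=i$. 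Summing over all pairs and paths, the net inflow at $w$ collapses to $\sum_{i} y'_{iw} - \sum_{j} y'_{wj}$, which is exactly the (SCF) net balance at $w$. By \eqref{eq:SCF_flow} this equals $1$ for $w\in R\setminus\{0\}$, matching \eqref{eq:SCFS_flowR}, and it equals $0$ for any Steiner vertex $w\in V\setminus R$ (never an endpoint of a required-pair flow), matching \eqref{eq:SCFS_flowS}.

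The delicate point, and the one I would write out most carefully, is precisely this cancellation of the interior contributions: it rests on shortest paths being simple and on Steiner vertices never acting as sources or sinks of $y'$, so that only the genuine endpoints survive in the net balance at each vertex. Once this is in place, $(x,y)$ is feasible for the relaxed (SCFS) and shares the objective value of $(x',y')$; choosing $(x',y')$ optimal for the relaxed (SCF) yields $z^* \leq z'^*$, and the lemma follows.
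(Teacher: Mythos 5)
Your construction of the projected pair $(x,y)$ and the feasibility check for the relaxed (SCFS) is sound, and in places it is actually tighter than the paper's own write-up: your term-by-term bound $y_{uv}\leq n\,x_{uv}$ (each summand of $y_{uv}$ is at most $n$ times the matching summand of $x_{uv}$) is cleaner than the paper's ratio argument, and your path-by-path cancellation for flow conservation is spelled out where the paper only says ``for the same reasons.'' This correctly establishes $z^{*}_{LP}\leq z'^{*}_{LP}$, i.e.\ that the (SCFS) relaxation is \emph{at most as strong} as the (SCF) relaxation.

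The gap is that this is only half of what the lemma asserts. ``Weaker'' here is meant in the strict sense of dominance: the paper's proof has a second, separate part exhibiting a concrete instance (three products on a small Steiner graph) where $z^{*}_{LP}\leq 18 < 20 = z'^{*}_{LP}$, the value $20$ being forced in (SCF) by the assignment equalities \eqref{eq:SCF_ass1}--\eqref{eq:SCF_ass2}, while (SCFS) admits a fractional solution of cost $18$ because its degree constraints are only inequalities and flow can be split across parallel shortest paths. Without such a separating example you have shown only that (SCFS) is \emph{not stronger}, which is compatible with the two relaxations always having equal value; you need to either reproduce an instance of this kind or otherwise certify that the inclusion of the projected feasible region in that of (SCFS) is strict in value for some input. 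A minor additional caveat: your appeal to ``all arc weights are positive'' to get simplicity of shortest paths is not literally guaranteed under Hypothesis~\ref{hyp:narrow} (crossing an aisle has zero length), so you should either restrict $P_{ij}$ to simple shortest paths or argue that zero-length arcs do not disturb the cancellation.
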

\begin{proof}$  $
Every projection of an optimal solution of SCF is feasible in SCFS, so $z_{LP}'^* \geq z_{LP}^*$ and there exists a case for which $z_{LP}'^* \neq z_{LP}^*$ so that SCF is a better formulation than SCFS. The two cases are detailed below:
	\begin{itemize}
		\item[$\bullet$]$z_{LP}^* \leq z_{LP}'^* $
	\end{itemize}
	 Let $(x',y')$ an optimal fractional solution of the linear relaxation of (SCF) and $(x,y) = (proj(x'),proj(y'))$. Then $(x,y)$ is a feasible solution for the linear relaxation of  (SCFS) since:
	 \begin{itemize}[noitemsep,topsep=0pt]
	 \setlength\itemsep{1pt}
	 \item[-] Assignment constraints \eqref{eq:SCFS_ass1} are respected due to the property \eqref{eq:P1_1} of $proj$.
	 \item[-] Conservation constraints are respected due to the property \eqref{eq:P1_2} of $proj$.
	 \item[-] For the same reasons, the flow constraints \eqref{eq:SCFS_flowR} and \eqref{eq:SCFS_flowS} are respected.
	 \item[-] Constraints \eqref{eq:SCFS_ybound} is respected: since the transformation is the same on $x'$ and $y'$ to obtain $x$ and $y$, the ratio is kept: $ \frac{y_{uv}}{x_{uv}} = \frac{y'_{ij}}{x'_{ij}} \leq n$ so \\ $ y_{uv} \leq nx_{uv}\ \forall (uv) \in P_{ij}, \forall i,j \in R$
					
	 \end{itemize}
	Thus, $(x,y)$ is feasible for $SCFS$ and $z_{LP}=z_{LP}'$.
	\begin{itemize}
		\item[$\bullet$]$z_{LP}^* \neq z_{LP}'^* $
	\end{itemize}
	
We consider an example with 3 products and show that $z_{LP}^*\leq 18 < z_{LP}'^* = 20$. Figure \ref{fig:representations} shows the representations of this example with the Steiner graph and with the complete graph. 
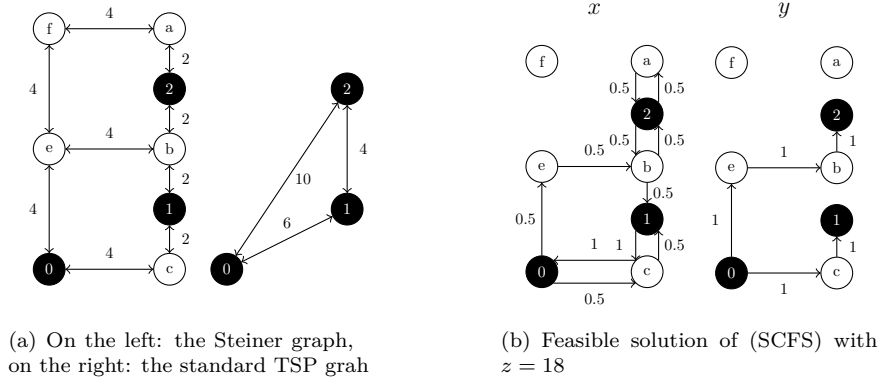
\begin{figure}
\begin{center}
\subfigure[On the left: the Steiner graph, \quad \quad \quad \quad \quad \ \ $ $ \ on the right: the standard TSP grah \label{fig:representations}]{
\begin{tikzpicture}[scale=0.8,every node/.style={scale=0.6,minimum size=20pt}]
\tikzstyle{VR}=[circle,draw,fill=black,text=white]
\tikzstyle{V}=[circle,draw]
\tikzstyle{n} = [<->]

\node[VR] (1) at (2,1){1};
\node[VR] (2) at (2,3){2};
\node[V] (f) at (0,4){f};
\node[V] (a) at (2,4){a};
\node[V] (e) at (0,2){e};
\node[V] (b) at (2,2){b};
\node[V] (c) at (2,0){c};
\node[VR] (0) at (0,0){0};

\draw[n] (0)--(e) node[midway, left] {4};
\draw[n] (f)--(e) node[midway, left] {4};
\draw[n] (a)--(2) node[midway, right] {2};
\draw[n] (b)--(2) node[midway, right] {2};
\draw[n] (1)--(b) node[midway, right] {2};
\draw[n] (c)--(1) node[midway, right] {2};
\draw[n] (f)--(a) node[midway, above] {4};
\draw[n] (e)--(b) node[midway, above] {4};
\draw[n] (0)--(c) node[midway, above] {4};

\node () at (0,-0.5) {};
\end{tikzpicture}
\begin{tikzpicture}[scale=0.8,every node/.style={scale=0.6,minimum size=20pt}]
\tikzstyle{VR}=[circle,draw,fill=black,text=white]
\tikzstyle{n} = [<->]

\node[VR] (1) at (2,1){1};
\node[VR] (2) at (2,3){2};
\node[VR] (d) at (0,0){0};

\node () at (4,-0.5) {};
\draw[n] (d)--(1) node[midway, above] {6};
\draw[n] (d)--(2) node[midway, right] {10};
\draw[n] (2)--(1) node[midway, right] {4};
\end{tikzpicture}
}
\subfigure[Feasible solution of (SCFS) with  $z=18$ \label{fig:feassol} ]{
\begin{tikzpicture}[scale=0.7,every node/.style={scale=0.6,minimum size=20pt}]
\tikzstyle{VR}=[circle,draw,fill=black,text=white]
\tikzstyle{V}=[circle,draw]

\tikzstyle{n} = [->]
\tikzstyle{p} = [<-]
\node[minimum size=1cm,font=\fontsize{15}{96}] (i) at (1,5){$x$};
\node[VR] (1) at (2,1){1};
\node[VR] (2) at (2,3){2};
\node[V] (f) at (0,4){f};
\node[V] (a) at (2,4){a};
\node[V] (e) at (0,2){e};
\node[V] (b) at (2,2){b};
\node[V] (c) at (2,0){c};
\node[VR] (0) at (0,0){0};

\draw[n] (0)--(e) node[midway, left] {0.5}; %1

\draw[n] (0.south east)--(c.south west) node[midway, below] {0.5}; %1
\draw[p] (0.north east)--(c.north west) node[midway, above] {1}; %0

\draw[n] (c.north east)--(1.south east) node[midway, right] {0.5};%1
\draw[p] (c.north west)--(1.south west) node[midway, left] {1};%0

\draw[p] (1)--(b) node[midway, right] {0.5};%0

\draw[p] (b.north west)--(2.south west) node[midway, left] {0.5};%0
\draw[n] (b.north east)--(2.south east) node[midway, right] {0.5};%1

\draw[p] (a.south east)--(2.north east) node[midway, right] {0.5};%0
\draw[n] (a.south west)--(2.north west) node[midway, left] {0.5};%0

\draw[p] (b)--(e) node[midway, above] {0.5};%1

\node () at (0,-0.5) {};
\end{tikzpicture}
\begin{tikzpicture}[scale=0.7,every node/.style={scale=0.6,minimum size=20pt}]
\tikzstyle{VR}=[circle,draw,fill=black,text=white]
\tikzstyle{V}=[circle,draw]

\tikzstyle{n} = [->]
\tikzstyle{p} = [<-]
\node[minimum size=1cm,font=\fontsize{15}{96}] (i) at (1,5){$y$};
\node[VR] (1) at (2,1){1};
\node[VR] (2) at (2,3){2};
\node[V] (f) at (0,4){f};
\node[V] (a) at (2,4){a};
\node[V] (e) at (0,2){e};
\node[V] (b) at (2,2){b};
\node[V] (c) at (2,0){c};
\node[VR] (0) at (0,0){0};

\draw[n] (0)--(e) node[midway, left] {1}; %1

\draw[n] (0)--(c) node[midway, below] {1}; %1

\draw[n] (c)--(1) node[midway, right] {1};%1

\draw[n] (b)--(2) node[midway, right] {1};%1

\draw[p] (b)--(e) node[midway, above] {1};%1

\node () at (0,-0.5) {};
\end{tikzpicture}
}
\end{center}
\caption{Example with $z^{*}_{LP} <  z'^{*}_{LP}$ }
\end{figure}
Solving the linear relaxation of (SCF) leads to $z_{LP}'^* = 20$ because of constraints (10) and (11).
%\noindent Objective of SCF$_{LP}$:  $\quad min \  6(x'_{01} + x'_{10}) + 10(x'_{02} + x'_{20}) + 4(x'_{12} + x'_{21})$\\
%Constraints \eqref{eq:SCF_ass1} and \eqref{eq:SCF_ass2} impose 
%${z_{LP}'}^* = 4(x'_{02} + x'_{12}) + 6(x'_{02} + x'_{01})+ 4 (x'_{20}+x'_{21}) + 6 (x'_{20}+ x'_{10}) = 20$
On the other side, we can easily build a feasible solution of the linear relaxation of (SCFS) with a cost $18$ as shown on Figure \ref{fig:feassol}. Thus, in this example we have $z^{*}_{LP} <  z'^{*}_{LP}$.

\end{proof}

In the following, we introduce improvements to offset the weakness of the initial formulation and get closer, or overcome, the quality of the Standard TSP single-commodity flow formulation.

\subsection{Strengthening of the bound\label{sssec:bf}}

Constraints \eqref{eq:SCFS_ybound} are big-M constraints that make the formulation quite weak and can be improved following Letchford, Nasiri and Theis \cite{letchford2013compact}.

Without loss of generality, we can assume that the order picker delivers the unit of commodity due to each required vertex at his first visit. Due to the warehouse structure, a minimum number  $n_R(i)$ of required vertices might have to be visited before visiting vertex $i$ (in particular after preprocessing).  We can apply a shortest path algorithm to compute $n_R(i)$ and reinforce the bound on $y_{ij}$. Constraints \eqref{eq:SCFS_ybound} are replaced by: 
\begin{equation}
y_{ij} \leq \left(n - n_R(i)\right) x_{ij} \ \forall i \in V, j\in \Gamma(i) \label{eq:SCFS_ybound+}\\
\end{equation}

\subsection{Additional cuts}

The Dantzig-Johnson-Fulkerson formulation based on sub-tour elimination constraints is the most well known formulation of the TSP and exhibits a very strong linear relaxation. We consider here a polynomial number of sub-tour elimination constraints depending on the warehouse dimensions rather than the number of products. In the single-flow formulation, the connectivity between all the required vertices is guaranteed. However, because of the big-M constraints \eqref{eq:SCFS_ybound+}, the fractional value of $x$ can be really small. Many sets of vertices are violating the sub-tour elimination constraints in practice. We focus on sets defined as cuts $(S,\bar{S})$ partitioning the warehouse into two subsets $S$, $\bar{S} \subset V$  where $S\cap R \neq \emptyset$ and $\bar{S}\cap R \neq \emptyset$. In other words, each subset contains at least one required vertex. Thus, there must be at least one arc going from $S$ to $\bar{S}$ and at least one arc going from $\bar{S}$ to $S$. The following valid inequality is added for each cut  $(S,\bar{S})$: 
\begin{equation}
x\left(S:\bar{S}\right) \geq 1
\end{equation}

\begin{rem}
It is easy to see, with  constraints of conservation \eqref{eq:SCFS_ass2} that this inequality is sufficient to impose also $x\left(\bar{S}:S\right) \geq 1$
\end{rem}

\begin{figure}
\centering
\includegraphics[scale=0.3]{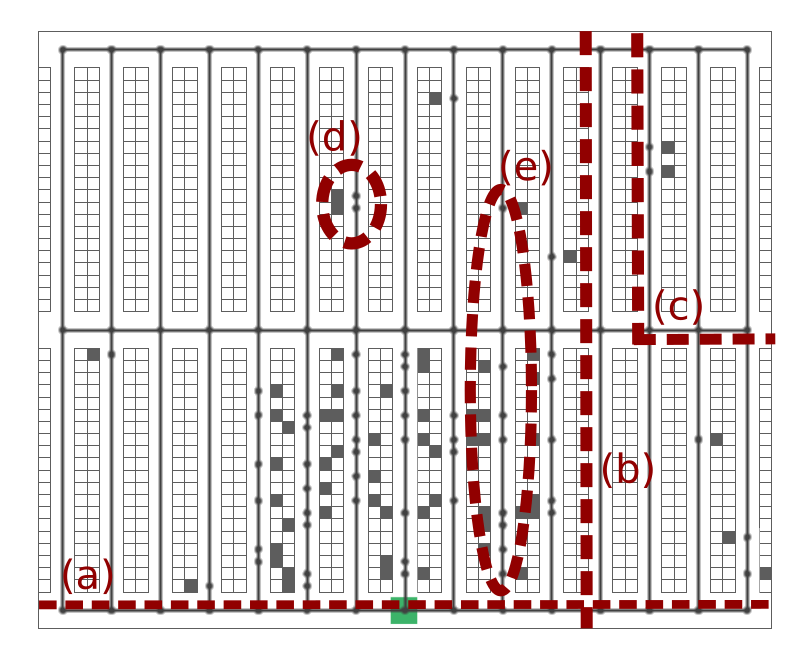}
\caption{Examples of the different cuts in a warehouse \label{fig:cuts}}
\end{figure}
We introduce the following cuts (see Figure \ref{fig:cuts}):
\begin{itemize}
\item Line cuts: horizontal (resp. vertical) cuts $C = (S,\bar{S})$ separating the warehouse horizontally (resp. vertically). See cuts (a) and (b) on Figure \ref{fig:cuts}.
\item Corner boxes cuts: we can combine horizontal and vertical cuts to create boxes attached to a corner of a warehouse. See cut (c) on Figure \ref{fig:cuts}.
\item Sub-aisle connexity:  we define $S \subset V$ as a set of adjacent vertices in the same sub-aisle ($|S| \geq 2$). See cut (d) on Figure \ref{fig:cuts}. There exists at most 6 sets of adjacent vertices in a sub-aisle since there are at most 4  products.
\item Cross cuts: S is defined as the interval between the highest product and the lowest product of two adjacent sub-aisles in the same aisle. See cut~(e) on Figure \ref{fig:cuts}.
\end{itemize}

Note that the total number of cuts considered is polynomial in the size of the warehouse (complexity in $\mathcal{O}(h v) $). We can also add valid inequalities implied by the structure of the graph: 
\begin{itemize}
\item Intersection connexity: a path reaching an intersection (except the depot) can not immediately backtrack in an optimal solution. Namely, if an arc comes in an intersection, i.e, a Steiner vertex, by one side, an arc must go out by \textbf{another} side.
Thus, we add the following constraints:\\
\begin{align*}
x_{ij} \leq \sum\limits_{k \in \Gamma(i)\setminus \{j\}} x_{ki} \quad \forall i \in I , j \in \Gamma(i)  \\
x_{ji} \leq \sum\limits_{k \in \Gamma(i)\setminus \{j\}} x_{ik} \quad \forall i \in I, j \in \Gamma(i)
\end{align*}
\item Patterns: we can identify logical implications from the 6 ways to go through a sub-aisle. We recall that a sub-aisle is surrounded by two intersections (denoted $s$ and $t$) and, after the preprocessing, contains at most 4 products (denoted $a$,$b$,$c$,$d$).

\begin{tabular}{m{8cm}r}
$ $ & \multirow{7}{*}{
\begin{tikzpicture}
\node[draw,circle,scale=0.3] (a) at (0,2) {};
\node[draw,circle,fill=black,text=white,scale=0.3] (a) at (0,1.5) {};
\node[draw,circle,fill=black,text=white,scale=0.3] (a) at (0,1.2) {};
\node[draw,circle,fill=black,text=white,scale=0.3] (a) at (0,0.8) {};
\node[draw,circle,fill=black,text=white,scale=0.3] (a) at (0,0.5) {};
\node[draw,circle,scale=0.3] (a) at (0,0) {};
\node[scale=0.6]  (a) at (0.3,2) {s};
\node[scale=0.6] (a) at (0.3,1.5) {d};
\node[scale=0.6] (a) at (0.3,1.2) {c};
\node[scale=0.6] (a) at (0.3,0.8) {b};
\node[scale=0.6] (a) at (0.3,0.5) {a};
\node[scale=0.6] (a) at (0.3,0) {t};
\end{tikzpicture}} \\
$x_{sd} \Rightarrow x_{dc} \text{ and } x_{ds} \Rightarrow x_{cd} $
&  \\
$ $ & \\
$x_{ta} \Rightarrow x_{ab} \text{ and } x_{at} \Rightarrow x_{dc}$ &  \\
$ $ & \\
$x_{cb} \Rightarrow x_{dc} \wedge x_{ba} \text{ and } x_{bc} \Rightarrow x_{ab} \wedge x_{cd}$ & \\
$ $ & 
\end{tabular}

 They are added as linear constraints thanks to the following scheme:
 $ P \Rightarrow Q $ is linearized into $P\leq Q$ and
$P \Rightarrow Q \wedge R $ is linearized into $z \geq Q + R - 1$, $z \leq Q$, $z \leq R$ and $P\leq z$, where $P$, $Q$, $R$ and $z$ are boolean variables.
\end{itemize}

Finally, notice that any tour leads to a symmetric one by reversing the direction of traversal. Although constraints can be added to break this symmetry, it did not pay off in our experiments and they were removed.

\section{Dynamic programming \label{sec:pdyn}}

A dynamic programming algorithm has been proposed by Ratliff and Rosenthal for the picking problem in a rectangular warehouse with two cross-aisles in 1983 \cite{ratliff1983order}. It was extended in 2001 to the case of three cross-aisles by Roodbergen and De Koster \cite{roodbergen2001routingDP}. More generally, the rectilinear TSP can be solved by dynamic programming using the very same ideas. An algorithm, proposed by Cambazard and Catusse \cite{cambazard2015fixed}, is proved to have a $O(hn7^h)$ (or more precisely $O(hv7^h)$) runtime complexity where $n$ cities are located on $h$ horizontal lines and $v$ vertical lines. The distance considered between any pair of cities is the $l_1$ (rectilinear or manhattan) distance. This algorithm is directly applicable to the picking problem which can be seen as a specific case. We will now give the key ideas and a summary of this approach in the present section. We refer the reader to \cite{cambazard2015fixed} for the details and in particular the proofs of correctness and complexity analysis.

We consider the problem in an undirected grid graph such as the one shown in Figure \ref{fig:graph}. The set of vertices located on a vertical aisle or two adjacent vertical aisles is a planar \textbf{separator} of this grid graph (e.g. $\{5,9,14\}$, $\{d,9,14\}$ or $\{d,10,14\}$ are separators in the graph of Figure \ref{fig:graph}). The rationale of the dynamic programming algorithm is that the problem can be split into two sub-problems, to the right and to the left of a separator by considering all the possible configurations of the separator (degree parity of the vertices and connected components described below). The algorithm builds a \textbf{tour subgraph} that can be directed as a post-processing step to obtain a picking tour. An example of such tour subgraph is shown on Figure \ref{toursbg}.

\begin{figure}[b!]
\centering
\includegraphics[scale=0.8]{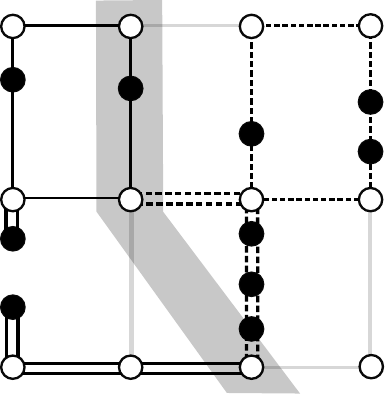}
 \label{fig:partial_graph}
\caption{The black edges represent a partial tour subgraph and the dashed edges represent one possible completion to a complete tour subgraph. The white vertices in the gray area represents the current state \{(E,E,E)(1,1,2)\}. \label{toursbg}}
\end{figure}

\subsection{States}
A state of the dynamic program is a possible configuration of a separator. In the following, such a state $\omega$ is denoted $\omega = \{(x_1, \ldots, x_h),(c_1, \ldots, c_h)\}$ where $x_i \in \{U,E,0\}$ and $c_i$ are respectively the parity label and the connected component of the $i$-th vertex of $\omega$. We use the same notation as Ratliff and Rosenthal \cite{ratliff1983order} to describe degree parities: even = E, odd = U (uneven) and zero = 0. Connected components are described by their indices or "$-$" for a zero degree. Figure \ref{toursbg} gives an example of a state where all vertices have an even degree and belong to two distinct connected components.
 
 \subsection{Transitions}

There are two types of transitions between states: vertical and horizontal transitions, corresponding to the decisions made on vertical or horizontal edges of the grid graph. Horizontal transitions are of three kinds: no edge, a single edge or a double edge. Vertical transitions are of the six kinds identified by Ratliff and Rosenthal \cite{ratliff1983order}: no edge, a single edge, a single double edge, two double edges connected to the top vertex, two double edges connected to the bottom vertex and four double edges defined by the largest gap (see Figure \ref{fig:6ways}).
\begin{algorithm}[h!]
\begin{algorithmic}[1]
\STATE $\omega_0 \leftarrow \{(0,\ldots,0), (-,\ldots,-)\}$; $T(w_0, 0) = 0$; $Layer_0  \leftarrow \{w_0\}$
\STATE $l \leftarrow 0$
\FOR{each edge $e$ of the grid graph from bottom to top and left to right} \label{line4}
		\STATE $Layer_{l+1} \leftarrow \emptyset$
		\FOR{each state $\omega \in Layer_l$} \label{line5}			
			\FOR{each possible transitions $tr$ for $e$}  \label{line6}
				\STATE $\omega' \leftarrow \omega + tr$  \label{line7}
		\IF{check($\omega', l+1$)} \label{lcheck}
					\IF{$\omega' \in Layer_{l+1}$}
						\IF{$T(\omega', l+1) > T(\omega, l) + length(tr)$} \label{line10}
							\STATE $T(\omega', l+1)\leftarrow T(\omega, l) + length(tr)$
						\ENDIF
					\ELSE
						\STATE $T(\omega', l+1)\leftarrow T(\omega, l) + length(tr)$
						\STATE $Layer_{l+1} \leftarrow Layer_{l+1} \cup \{\omega'\}$  \label{line14}
					\ENDIF
				\ENDIF
			\ENDFOR
		\ENDFOR
		\STATE $l \leftarrow l+1$
\ENDFOR
\STATE $w_{opt} \leftarrow \argmin_{\omega \in L_{hv}} T(\omega, hv)$
\RETURN $w_{opt}$
\caption{Dynamic Programming algorithm for  rectangular picking}
\label{algogo}
\end{algorithmic}
\end{algorithm}

\subsection{Outline of the algorithm}

Algorithm \ref{algogo} processes the edges of the grid graph from bottom to top and then from left to right (line \ref{line4}). Typically on Figure \ref{fig:graph}, the edges would be considered in the following order: (4,8), (8,13), (4,5), (8,9), (13,14), (5,9)  and so on.  All the states obtained after adding $l$ transitions (denoted $Layer_l$) belong to the l-th layer and the algorithm can be seen as a shortest path algorithm in a layered graph (see Figure \ref{fig:gpdyn}). From any state (line \ref{line5}), the possible transitions are considered (line \ref{line6}); three or six depending if the edge considered is a vertical or horizontal one. We denote by $T(\omega, l)$ the value of the shortest path to reach state $\omega$ located on layer $l$. Lines \ref{line7}-\ref{line14} update the possible states of the next layer (l+1) by extending the considered state $\omega$ of layer $l$ with the considered transition $tr$. The new state $\omega'$ might not be a valid tour subgraph and it is checked line \ref{lcheck}. For instance, a partial tour subgraph is not valid if a vertex is left with an odd number of incident edges, if a product is not collected (zero degree) or if it has more than one connected component on the last layer (since the final tour subgraph must be connected). A shortest partial tour subgraph might be already known to reach $\omega'$ and this is checked line \ref{line10}.

An illustrative execution of the algorithm is given Figure~\ref{fig:gpdyn}, where a particular path is outlined showing the relation between states and partial tour subgraphs.

\begin{figure}
%\begin{adjustwidth}{-2cm}{-2cm}
\centering
\includegraphics[scale=0.9]{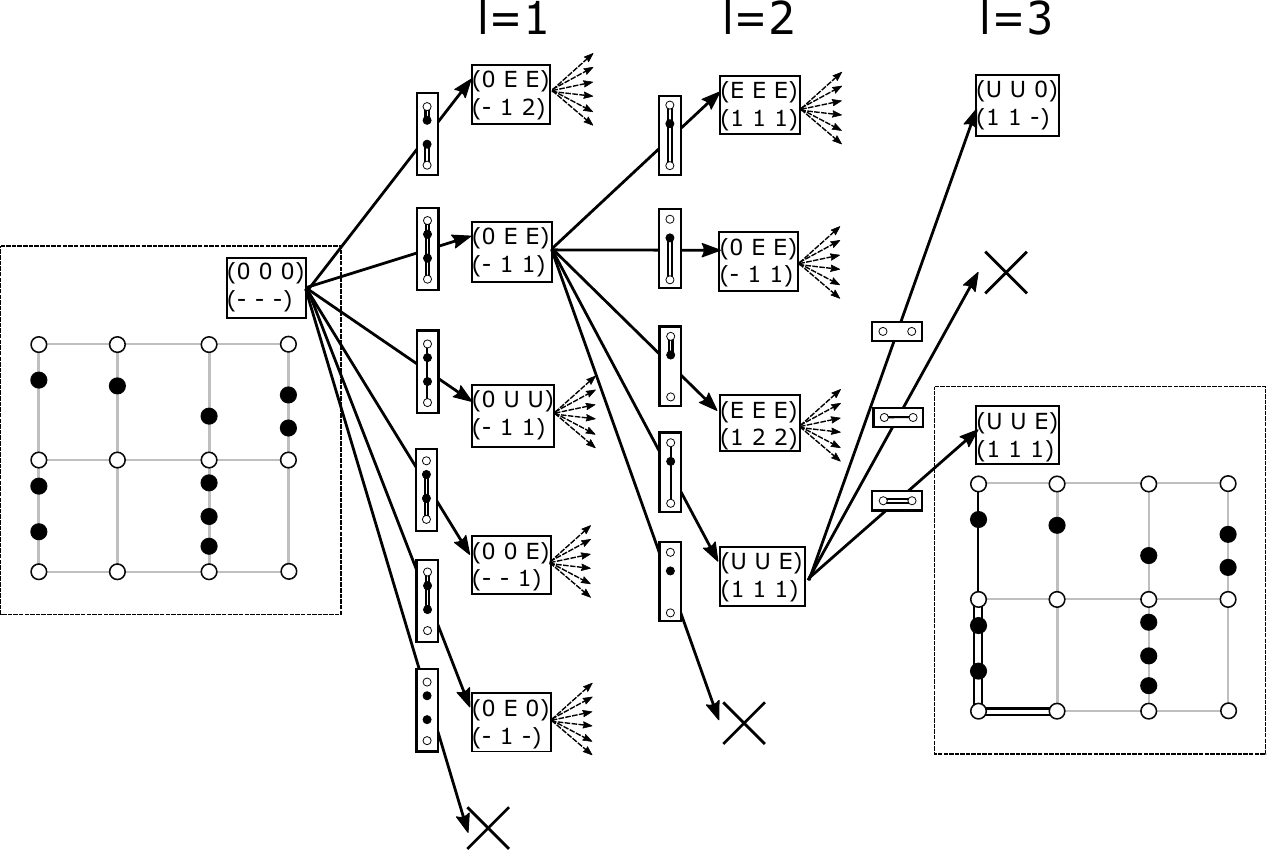}

\caption{Example of the graph underlying the dynamic programming algorithm. Each layer is identified with a value of l. Three transitions are possible from each state. The partial toursubgraph obtained by following the black path is shown on the bottom right corner. \label{fig:gpdyn}}
%\end{adjustwidth}
\end{figure}

\section{Experimental results \label{sec:results}}

\subsection{Implementation}

We used the CPLEX Java API (version 12.6) to solve the different linear programs.
An initial upper bound is given as "warm start" and obtained with the Lin-Kernighan heuristic \cite{helsgaun2000effective} implemented in the software LKH (freely available at \url{http://webhotel4.ruc.dk/~keld/research/LKH/}). This gives a better comparison with the TSP solver Concorde that takes advantage of an initial accurate upper bound\cite{applegate2003chained}. 
The instances solved come from an academic benchmark proposed by Theys, Dullaert and Herroelen \cite{theys2007routing}. Some of these instances seem bigger than realistic data, but allow to test the limits of the algorithms and to perform a comparison with previously published results on the exact same instances \cite{theys2010using}.
%TODO
Experiments were performed on an Intel Xeon E5-2440 v2 @ 1.9 GHz processor and 32 GB of RAM. The experiments ran with a memory limit of 8 GB of RAM.
%This heuristic, dedicated to the TSP, improves iteratively a solution by swapping pairs of sub-tours to make a new tour. It is a generalization of the heuristic "$k$-opt" (which is itself a generalization of 2-opt) in which, at each step, the algorithm switch $k$ paths to make the tour shorter. %Lin-Kernighan is more efficient because it is adaptive: at each step it determines how many paths must be switched to improve the solution \cite{lin1973effective}.\\

\subsection{Description of the instances}

The benchmark of Theys\textit{ et al.} contains 108 classes of instances, where a class is defined by 5 parameters described in Table \ref{tab:instances}. The storage policy is either random-based or volume-based. A random policy means the products are randomly affected in the warehouse. A volume-based policy means a twenty percent of the most demanded items are located near the first cross-aisle. For more details on the instance, see Theys \textit{et al.} \cite{theys2010using}.

\begin{table}
\centering
\begin{tabular}{|c|c|}
\hline
parameter & values\\
\hline
the number of vertical aisles & \{ 5, 15, 60 \} \\
the number of cross-aisles& \{ 3, 6, 11 \} \\
the number of products in the picking list & \{ 15, 60, 240 \} \\
the storage policy& \{ volume (V), random (R) \} \\
the location of the depot in the warehouse& \{ central, decentral \} \\
\hline
\end{tabular}
\caption{Parameters and values of the instances \label{tab:instances}}
\end{table}

Early experiments showed that the location of the depot (central or decentral) does not affect the efficiency of our models. Thus, we report our results on the 54 classes with a central depot where each class contains 10 instances. 
%A class of instances is named after its number of aisles, cross-aisles and products as well as the storage policy in the warehouse and contains 10 instances.

% We focus on a subset of the benchmark and study 6 classes of instances reported in Table \ref{tablebench}.
%\begin{table}
%\footnotesize
%\centering
%\begin{tabular}{|l|c|c|c|}
%\hline
%Class &  Number of aisles & Number of cross-aisles & Number of products \\
%\hline
%15\_3\_60  & 15 & 3 & 60\\
%15\_6\_15  & 15 & 6& 15\\
%15\_6\_240  & 15 & 6& 240\\
%15\_6\_60  & 15 & 6& 60\\
%60\_11\_240 & 60 & 11& 240 \\
%60\_3\_240 & 60 & 3 & 240\\
%\hline
%\end{tabular}
%\caption{Size of the instances classes considered.\label{tablebench}}
%\end{table}
%Each class is considered with a random (suffixe \_R) and a volume (suffix \_V) distribution of products which leads to 12 classes in total. In the random version, the products are stored randomly in the warehouse while in the volume version, 80\% of the products are gathered in close sub-aisles. Each class is made of 10 instances.

\subsection{Results}

%In this section we present some results that shows the single-commodity flow formulation with the different improvements provided in our work is an efficient algorithm to solve the academic benchmark described above. This part will be divided into different paragraphs. 
The performances of the following algorithms are compared:
\begin{itemize}
\setlength\itemsep{0.2em}
\item SCFS: the basic Steiner single commodity flow formulation 
\item SCFS\_PP: SCFS with preprocessing of section \ref{sec:pp}
\item SCFS+: SCFS with preprocessing and the additional valid inequalities
\item SCF+: the standard single commodity flow formulation with vertex preprocessing \ref{sec:ppbasic}
\item CDE: Concorde 
\item CDE+: Concorde with preprocessed input (described in section \ref{sec:ppconcorde})
\item PDYN: dynamic programming (section \ref{sec:pdyn})
\end{itemize}

To start with, we report an analysis of the problem's size, which is the biggest advantage of the SCFS formulation. Then, we look at the strength of the linear relaxation of each formulation and note that SCFS  strongly benefits from the improvements proposed. In the end, we compare resolution times. The tables report the average value of the quantity studied (size, gap, cpu times) for all instances restricted to the value of the parameter given as the column header.

\paragraph{Size analysis}

\begin{table}
\centering
\footnotesize
\begin{adjustwidth}{-.5in}{-.5in}  
\begin{tabularx}{1.3\textwidth}{|c|c| *{11}{|>{\centering\arraybackslash}X}|}
\hline
& \multirow{2}{*}{Total} & \multicolumn{2}{c|}{Storage policy} & \multicolumn{3}{c|}{\# aisles} & \multicolumn{3}{c|}{\# cross-aisles} & \multicolumn{3}{c|}{\# products} \\
& & R & V & 5 & 15 & 60 & 3 & 6 & 11 & 15 & 60 & 240 \\
\hline
\hline
TSP &  20580 & 20580 & 20580 & 20580 & 20580 &20580 & 20580 & 20580 & 20580 & 240 & 3660 & 57840 \\
 \hdashline
TSP+ & 6676 & 9228 & 4123 & 2319 & 6042 & 11665 & 3744 & 7238 & 9046 & 217 & 2350 & 17460 \\
\hline
Evolution & -68\%  & -55\%  & -80\%  & -89\%  & -71\%  & -43\%  & -82\%  & -65\%  & -56\%  & -9\%  & -36\%  & -70\%  \\
\hline
\hline
Steiner & 768 & 790 & 746 &  192 & 481 & 1633 & 355 & 705 & 1245 & 678 & 743 & 884  \\
 \hdashline
Steiner+ & 474 & 543  & 405 &  157 & 347 & 917 & 267 & 456 & 698 & 216 & 446 & 760\\
\hline
Evolution & -38\% & -31\% & -46\% & -18\% & -28\% & -44\% & -25\% & -35\% & -44\% & -68\% & -40\% & -14\% \\
\hline
\end{tabularx}
\caption{Average number of arcs in TSP graph and Steiner graph, with and without preprocessing\label{tab:size}}
\end{adjustwidth}
\end{table}

%
%\begin{table}
%\centering
%\footnotesize
%\begin{tabular}{|l|cc|c||cc|c||}
%\hline
%\textbf{Class} &  \textbf{TSP} & \textbf{TSP+} & Evol & \textbf{Steiner} & \textbf{Steiner TSP+} & Evol\\
%\hline
%15\_3\_60\_R & 3660 & 3266 & -11\% & 246 & 233 & -5\%\\
%15\_3\_60\_V & 3660 & 2425 & -34\% & 210 & 158 & -25\%\\
%15\_6\_15\_R & 240 & 234 & -3\% & 351 & 172 & -51\%\\
%15\_6\_15\_V & 240 & 237 & -1\% & 350 & 124 & -65\%\\
%15\_6\_240\_R & 57840 & 45336 & -22\% & 630 & 623 & -1\%\\
%15\_6\_240\_V & 57840 & 26280 & -55\% & 493 & 459 & -7\%\\
%15\_6\_60\_R & 3660 & 3470 & -5\% & 434 & 381 & -12\%\\
%15\_6\_60\_V& 3660 & 3049 & -17\% & 409 & 285 & -30\%\\
%60\_11\_240\_R & 57840 & 47150 & -18\% & 2935 & 2512 & -14\%\\
%60\_11\_240\_V & 57840 & 26377 & -54\% & 2826 & 2044 & -28\%\\
%60\_3\_240\_R & 57840 & 47150 & -18\% & 968 & 934 & -4\%\\
%60\_3\_240\_V & 57840 & 26571 & -54\% & 786 & 645 & -18\%\\
%\hline
%\end{tabular}
%\caption{Average number of arcs in TSP graph and Steiner graph, with and without preprocessing. \label{tab:size}}
%\end{table}

\indent Table \ref{tab:size} shows the number of arcs in the TSP case (complete graph) and in the Steiner graph with and without the preprocessing \ref{sec:pp}.  The lines "Evolution" show the percentage of arcs removed when the preprocessing is applied.
We notice that the preprocessing is really more efficient in the TSP case due to the completeness of the graph. We also notice that the number of aisles, cross-aisles and products have an opposite effect depending on the solver.
%TODO expliquer

%We notice that in the case of TSP, even if the preprocessing deletes less vertices than in SCFS, the saving is significant: it can reach more than 50\%, due to the completeness of the graph. We also observe that the benefits are higher when the warehouse has a \textit{volume} policy which is an expected result.
Finally note that the number of arcs in the Steiner graph is much smaller than in the complete graph which significantly improve the memory scaling of the MILP formulations.

\paragraph{Analysis of lower bounds}

\begin{table}
\centering
\footnotesize
\begin{adjustwidth}{-.5in}{-.5in}  
\begin{tabularx}{1.3\textwidth}{|c|c| *{11}{|>{\centering\arraybackslash}X}|}
\hline
& \multirow{2}{*}{Total} & \multicolumn{2}{c|}{Storage policy} & \multicolumn{3}{c|}{\# aisles} & \multicolumn{3}{c|}{\# cross-aisles} & \multicolumn{3}{c|}{\# products} \\
& & R & V & 5 & 15 & 60 & 3 & 6 & 11 & 15 & 60 & 240 \\
\hline
\hline
SCFS & 46.4\% & 43.12\% & 49.68\% & 34.85\% & 45.8\% & 58.53\% & 41.95\% & 47.13\% & 50.11\% & 46.9\% & 45.24\% & 47.05\% \\
\hline
SCFS\_PP & 39.32\% & 37.41\% & 41.22\% & 24.49\% & 38.91\% & 54.55\% & 33.23\% & 40.68\% & 44.03\% & 44.74\% & 40.55\% & 32.65\% \\
\hline
SCFS+ & 1.4\% & 1.68\% & 1.13\% & 2.7\% & 1.05\% & 0.46\% & 1.72\% & 1.28\% & 1.21\% & 0.4\% & 1.21\% & 2.6\% \\
\hline
\end{tabularx}
\caption{Average gap of linear relaxations for single-commodity flow formulations\label{tab:scfs_comp2}}
\end{adjustwidth}
\end{table}
\begin{table}
\centering
\footnotesize
\begin{adjustwidth}{-.5in}{-.5in}  
\begin{tabularx}{1.3\textwidth}{|c|c| *{11}{|>{\centering\arraybackslash}X}|}
\hline
& \multirow{2}{*}{Total} & \multicolumn{2}{c|}{Storage policy} & \multicolumn{3}{c|}{\# aisles} & \multicolumn{3}{c|}{\# cross-aisles} & \multicolumn{3}{c|}{\# products} \\
& & R & V & 5 & 15 & 60 & 3 & 6 & 11 & 15 & 60 & 240 \\
\hline
SCFS+ &  0.65\% & 1.01\% & 0.29\% & 1.33\% & 0.36\% & 0.27\% & 0.81\% & 0.53\% & 0.61\% & 0.04\% & 0.34\% & 1.59\%\\
\hline
SCF+ & 6.44\% & 7.47\% & 5.48\% & 2.39\% & 4.73\% & 13.3\% & 7.6\% & 6.27\% & 5.34\% & 0.37\% & 5.09\% & 15.36\%   \\
\hline
CDE+ & \multicolumn{12}{c|}{ < 0.1\% } \\
\hline
\end{tabularx}
\caption{Average gap of root node relaxations for single-commodity flow formulations and Concorde \label{tab:scfs_comp21}}
\end{adjustwidth}
\end{table}

%
%\begin{table}[b!]
%\centering
%\footnotesize
%\begin{tabular}{|l|rrr||rrr|}
%\hline
% &  \multicolumn{3}{c||}{ \textbf{LP Relaxation}} &\multicolumn{3}{c|}{  \textbf{Root node relaxation}}  \\
%\textbf{Class} & \textbf{SCFS} & \textbf{SCFS\_PP}  & \textbf{SCFS+} & \textbf{SCFS+} & \textbf{SCF+} & \textbf{CDE+} \\
%\hline
%15\_3\_60\_R & 33,0\% & 28,9\%  & 1,3\% & 0,33\% & 2,87\% & 0,00\%\\
%15\_3\_60\_V & 35,0\% & 24,6\%  & 0,8\%  & 0,20\% & 1,44\% & 0,00\%\\
%15\_6\_15\_R & 36,3\% & 34,3\%  & 0,7\%  & 0,00\% & 0,42\% & 0,00\%\\
%15\_6\_15\_V & 39,0\% & 37,0\% & 0,4\%  & 0,04\% & 0,70\% & 0,04\%\\
%15\_6\_240\_R & 40,9\% & 30,9\% & 1,5\% & 0,83\% & 2,90\% & 0,28\%\\
%15\_6\_240\_V & 47,3\% & 30,1\% & 1,8\% & 0,74\% & 1,61\% & 0,02\%\\
%15\_6\_60\_R & 36,7\% & 35,0\% & 1,4\% & 0,44\% & 1,93\% & 0,03\%\\
%15\_6\_60\_V & 42,1\% & 37,7\%  & 1,0\% & 0,04\% & 2,60\% & 0,00\%\\
%60\_11\_240\_R& 38,8\% & 33,2\%  & 2,6\% & 1,85\% & 1,95\% & 0,03\%\\
%60\_11\_240\_V & 54,9\% & 30,6\%  & 1,1\% & 0,65\% & 2,90\% & 0,00\%\\
%60\_3\_240\_R & 42,1\% & 37,4\%  & 0,5\%  & 0,35\% & 10,26\% & 0,02\%\\
%60\_3\_240\_V & 54,7\% & 35,9\%  & 0,3\% & 0,05\% & 15,55\% & 0,07\%\\
%\hline
%\hline
%Total & 41,7\% & 33,0\% & 1,1\% & 0,46\% & 3,76\% & 0,04\%\\
%\hline
%\end{tabular}
%\caption{Average gap of lower bounds to optimal value for single-commodity flow formulations.\label{tab:scfs_comp}}
%\end{table}

\indent Table \ref{tab:scfs_comp2} compares the average gap between the \textbf{linear relaxation} and the optimal value for each parameter and for the different single-commodity flow formulations. The gap is computed as $\frac{z^{*} - z^*_{LP}}{z^*} \times 100 $. The results clearly demonstrate that the improvements proposed are very effective to strengthen the linear relaxation. The gap moves from  46\% to 1\% in average without increasing significantly the computing time which moves from 0.07 second to 0.2 second in average.
%Thus, this result proves that our work on the SCFS formulation is efficient, and each improvement has a significant impact on the linear relaxation.\\
To compare our results to Concorde (see Table \ref{tab:scfs_comp21}), we observe the gap between the \textbf{lower bound at the root node} of the search tree and the optimal value. This lower bound is better than the linear relaxation since CPLEX and Concorde apply many techniques to improve it. 

%\begin{table}[h!]
%\centering
%\footnotesize
%%%%%%%%%%%%%%%%% AVEC WARM START 
%\begin{tabular}{|l|rrr|}
%\hline
%\textbf{Class} & \textbf{SCFS+} & \textbf{SCF+} & \textbf{CDE+} \\
%\hline
%15\_3\_60\_R& 0,33\% & 2,87\% & 0,00\%\\
%15\_3\_60\_V & 0,20\% & 1,44\% & 0,00\%\\
%15\_6\_15\_R & 0,00\% & 0,42\% & 0,00\%\\
%15\_6\_15\_V & 0,04\% & 0,70\% & 0,04\%\\
%15\_6\_240\_R & 0,83\% & 2,90\% & 0,28\%\\
%15\_6\_240\_V & 0,74\% & 1,61\% & 0,02\%\\
%15\_6\_60\_R & 0,44\% & 1,93\% & 0,03\%\\
%15\_6\_60\_V & 0,04\% & 2,60\% & 0,00\%\\
%60\_11\_240\_R & 1,85\% & 1,95\% & 0,03\%\\
%60\_11\_240\_V & 0,65\% & 2,90\% & 0,00\%\\
%60\_3\_240\_R & 0,35\% & 10,26\% & 0,02\%\\
%60\_3\_240\_V & 0,05\% & 15,55\% & 0,07\%\\
%\hline
%\hline
%Total & 0,46\% & 3,76\% & 0,04\%\\
%\hline
%\end{tabular}
%
%
%\caption{Average gap between value at root node and optimal value\label{tab:root_comp}}
%\end{table}

SCFS+ has a strong linear relaxation despite the fact that the initial formulation SCFS is weaker than all the other ones. In practice, we observe that the linear relaxation of the improved Steiner single commodity flow formulation (SCFS+) is significantly \textbf{stronger} than the linear relaxation of the improved standard single commodity flow formulation (SCF+).

\paragraph{Performances}
\indent We set a time-limit of 30 minutes for SCFS+ and SCF+. Some instances were unsolved in this time limit. Table \ref{tab:bad_comp} shows the number of unsolved instances after 30 minutes of processing depending on the different parameters.

%\begin{table}[h!]
%\centering
%\footnotesize
%\begin{tabular}{|c|c|c|c|c|}
%\hline
%SCFS+ & SCF+ & CDE & CDE+ & PDYN \\
%\hline
%10        & 43       & 1       & 0        & 10 (memory issue) \\
%%SCFS+ & 10\\
%%SCF+ & 43\\
%%CDE & 1\\
%%CDE+& 0\\
%%PDYN & 10 (memory issue)\\
%\hline
%\end{tabular}
%\caption{Number of unsolved instances after 30 minutes\label{tab:bad_comp}}
%\end{table}

\begin{table}
\centering
\footnotesize
\begin{adjustwidth}{-.5in}{-.5in}  
\begin{tabularx}{1.3\textwidth}{|c|c| *{11}{|>{\centering\arraybackslash}X}|}
\hline
& \multirow{2}{*}{Total} & \multicolumn{2}{c|}{Storage policy} & \multicolumn{3}{c|}{\# aisles} & \multicolumn{3}{c|}{\# cross-aisles} & \multicolumn{3}{c|}{\# products} \\
& & R & V & 5 & 15 & 60 & 3 & 6 & 11 & 15 & 60 & 240 \\
\hline
SCFS+ & 18 & 18 & 0 & 1 & 4 & 13 & 1 & 2 & 15 & 0 & 0 & 18 \\
\hline
SCF+ & 136 & 88 & 48 & 19 & 34 & 83 & 51 & 41 & 44 & 0 & 26 & 110  \\
\hline
PDYN & 180 & 90 & 90 & 60 & 60 & 60 & 0 & 0 & 180 & 60 & 60 & 60 \\ 
\hline 
\hline
\# instances &  540 & 270 & 270 & 180 & 180 & 180 & 180 & 180 & 180 & 180 &180 &180\\
\hline
\end{tabularx}
\caption{Number of unsolved instances after 30 minutes\label{tab:bad_comp}}
\end{adjustwidth}
\end{table}

\begin{table}
\centering
\footnotesize
\begin{adjustwidth}{-.5in}{-.5in}  
\begin{tabularx}{1.3\textwidth}{|c|c| *{11}{|>{\centering\arraybackslash}X}|}
\hline
& \multirow{2}{*}{Total} & \multicolumn{2}{c|}{Storage policy} & \multicolumn{3}{c|}{\# aisles} & \multicolumn{3}{c|}{\# cross-aisles} & \multicolumn{3}{c|}{\# products} \\
& & R & V & 5 & 15 & 60 & 3 & 6 & 11 & 15 & 60 & 240 \\
\hline
SCFS+ &  1.3\% & 1.3\% & - & 0.85\% & 0.46\% & 1.59\% & 0.23\% & 0.29\% & 1.51\% & - & - & 1.3\%\\
\hline
SCF+ & 19.63\% & 19.1\% & 20.51\% & 10.19\% & 18.8\% & 23.45\% & 16.37\% & 21.19\% & 23.91\% & - & 7.79\% & 23.47\%   \\
\hline
\end{tabularx}
\caption{Average gap between best upper and lower bounds for unsolved instances \label{tab:scfs_badgap}}
\end{adjustwidth}
\end{table}

The improved Steiner formulation completely outperforms the standard compact TSP formulation. It still cannot solve some instances but 
they are from only 5 classes from the biggest ones and only with a random policy (15\_11\_240\_R, 5\_11\_240\_R 60\_11\_240\_R, 60\_3\_240\_R, 60\_6\_240\_R). On the other hand, instances from 16 classes cannot be solved by the SCF solver. The only parameter guaranteeing an optimal resolution is \# products = 15.
Moreover, as the Table \ref{tab:scfs_badgap} shows, the gap between the lower and upper bounds is really smaller for the SCFS+, which indicates that the upper bound can be used as a good feasible solution.\\
The dynamic program solves all the instances with less than 11 cross-aisles. This is a known limitation of this approach since the algorithm has an exponential complexity in the number of cross-aisles.

%
%Note that the dynamic program handles the problem with up to 6 cross-aisles in less than a second. It fails to solve instances with 11 horizontal cross-aisles due to memory issues, while we can hope that further work on SCFS will enable it to overcome current limitations.
%The instances that cannot be solved in 30 minutes by SCFS+ are from classes 60\_ 11\_ 240\_ R and 60\_3\_240\_R, which are big and random instances, so that the products are spread in the warehouse.
%Actually, we observe that the results on \textit{volume} instances are really better. 
%This suggests that optimizing the storage policy of the warehouse is important.

 To compare resolution time, we choose the instances solved in less than 30 minutes by solver SCFS+. Formulation SCF+ has too many unsolved instances after this time limit so it appears irrelevant to include it in the comparison.  Table \ref{tab:time_comp} shows the numerical results. For each parameter, SCFS+ is slower than Concorde and dynamic programming. However, on many instances the computing time is reasonable.\\
We also included results for solver CDE to show the impact of the preprocessing on the Concorde solver. Note that, with the preprocessing, any instance of the entire benchmark of \cite{theys2010using} can be solved optimally in less than 1 minute by Concorde.\\
The dynamic programming approach is the quickest since all the accepted instances are solved in less than a second.

\begin{table}
\centering
\footnotesize
\begin{adjustwidth}{-.5in}{-.5in}  
\begin{tabularx}{1.3\textwidth}{|c|c| *{11}{|>{\centering\arraybackslash}X}|}
\hline
& \multirow{2}{*}{Total} & \multicolumn{2}{c|}{Storage policy} & \multicolumn{3}{c|}{\# aisles} & \multicolumn{3}{c|}{\# cross-aisles} & \multicolumn{3}{c|}{\# products} \\
& & R & V & 5 & 15 & 60 & 3 & 6 & 11 & 15 & 60 & 240 \\
\hline
SCFS+ & 36.07 & 61.21 & 12.62 & 23.89 & 56.96 & 27.12 & 3.44 & 35.88 & 71.69 & 0.07 & 4.05 & 111.64\\
\hline
%SCF+ &57.81 & 40.92 & 71.66 & 25.82 & 85.62 & 69.06 & 28.98 & 68.62 & 74.12 & 0.37 & 51.81 & 218.85 \\
%\hline
PDYN & 0.27 & 0.28 & 0.27 & 0.05 & 0.16 & 0.61 & 0 & 0.54 & - & 0.24 & 0.27 & 0.30 \\
\hline
CDE & 6.86 & 12.8 & 0.93 & 17.61 & 2.11 & 0.88 &   14.82 & 3.89 & 1.88 & 0.01 & 0.13 & 20.45 \\
\hline
CDE+ & 1.60 & 2.97 & 0.23 & 3.45 & 1.04 & 0.30 & 2.20 & 1.55 & 1.04 & <0.01 & 0.1 & 4.68\\
\hline
\end{tabularx}
\caption{Average time of optimal resolution (in seconds) for instances solved in less than 30 minutes with each solver \label{tab:time_comp}}
\end{adjustwidth}
\end{table}

%
%\begin{table}[h!]
%\centering
%\footnotesize
%%%%%%% AVEC WARM START
%\begin{tabular}{|l|rrr|}
%\hline
%\textbf{Class} & \textbf{SCFS+} & \textbf{CDE+} & \textbf{PDYN} \\
%\hline
%15\_3\_60\_R& 0,64 & 0,11 & 0,00\\
%15\_3\_60\_V & 0,19 & 0,05 & 0,00\\
%15\_6\_15\_R & 0,12 & 0,00 & 0,29\\
%15\_6\_15\_V& 0,09 & 0,01 & 0,28\\
%15\_6\_240\_R & 459,30 & 9,77 & 0,40\\
%15\_6\_240\_V & 10,15 & 0,35 & 0,33\\
%15\_6\_60\_R & 2,52 & 0,07 & 0,31\\
%15\_6\_60\_V & 0,49 & 0,04 & 0,32\\
%60\_3\_240\_R & 27,30 & 1,18 & 0,01\\
%60\_3\_240\_V & 10,80 & 1,95 & 0,01\\
%\hline
%\hline
%Sub total & 51,16 & 1,35 & 0,19\\
%\hline
%\multicolumn{4}{c}{ }\\
%\hline
%60\_11\_240\_R & 486,63 & 0,34 & -\\
%60\_11\_240\_V & 125,31 & 0,22 & -\\
%\hline
%\hline
%Total & 62,08 & 1,24 & -\\
%\hline
%\end{tabular}
%
%\caption{Average time of optimal resolution (in seconds) for instances solved in less than 30 minutes with each solver \label{tab:time_comp}}
%\end{table}

\paragraph{Instances from Scholz \textit{et al}}
Scholz \textit{et al.} introduced other instances for the case of a single-block layout \cite{scholz2016new}. 
\begin{table}[h!]
\centering
\begin{tabular}{|c|c|}
\hline
parameter & values\\
 \hline 
  number of products in a sub-aisle & 45\\
  number of cross-aisles& 2\\
 number of vertical aisles & \{ 5, 10, 15, 20, 25, 30 \} \\
 number of products in the picking list & \{ 30, 45, 60, 75, 90 \} \\
location of the depot in the warehouse&  decentral \\
\hline
\end{tabular}
\caption{Parameters and values of the Scholz instances \label{tab:Sinstances}}
\end{table}

 The parameters are given in table \ref{tab:Sinstances} and form 30 classes of instances. For each one of their classes, we generated 10 instances of the same size. As expected, since these instances have only one block, the dynamic program is extremely efficient and solves any instance optimally in less than a second. Our MILP model also provides optimal solution for all instances almost instantaneously. Indeed, the maximal time of resolution is 6.7 seconds for the biggest instance (30 aisles and 90 products) and the average on all the instances is 0.28 seconds. Moreover, the model introduced by Scholz \textit{et al.} fails to find the optimal solution for 44 instances. 
 Table \ref{tab:Stime_comp} details the computation time to find the optimal solution for these solvers. Please note that our instances are not exactly the same than those solved by Scholz \textit{et al.} but were generated with the same parameters.

\begin{table}
\centering
\footnotesize
\begin{adjustwidth}{-.5in}{-.5in}  
\begin{tabularx}{1.3\textwidth}{|c|c| *{14}{|>{\centering\arraybackslash}X}|}
\hline
& \multirow{2}{*}{Total}  & \multicolumn{6}{c|}{\# aisles}  & \multicolumn{5}{c|}{\# products} \\
&  & 5 &10 & 15  & 20 & 25 & 30 & 30 & 45 & 60 & 75 & 90 \\
\hline
Scholz  & 167.7 & 0.11 & 1.21 & 10.54 & 170.91 & 314.56 & 508.87 & 45.53 & 87.56 & 149.03 & 233.24 & 323.14  \\
\hline
\hline
SCFS+ & 0.28  &  0.06 & 0.09 & 0.13 & 0.34 & 0.54 & 0.54 & 0.07 & 0.13 & 0.25 & 0.38 & 0.6 \\
\hline
PDYN  & $<$ 0.001  & \multicolumn{11}{c|}{$<$ 0.001}  \\
%\hline
%CDE+  & & 5 &10 & 15  & 20 & 25 & 30 & 30 & 45 & 60 & 75 & 90 \\
\hline
\end{tabularx}
\caption{Average time of optimal resolution (in seconds) for "Scholz instances", solved in less than 30 minutes with each solver \label{tab:Stime_comp}}
\end{adjustwidth}
\end{table}

\paragraph{Discussion}
%In general, one can use heuristics or exact algorithms to solve the order picking problem. 
%In this paper we discuss only exact algorithms and invite the reader to refer to \cite{theys2010using} to find experimental comparisons between different recent heuristics and Concorde.

A number of fast heuristics have been proposed in the past to
solve the order picking problem without any
guarantee of optimality. And some of these heuristics have been
improved to achieve better solution at the
expanse of their running time\cite{theys2010using}. Note that the dynamic program proposed
in this paper can solve instances up to 6 cross-aisles with an average
time of 0.27 seconds \textit{i.e.}, the same order of running time reported in
Theys et al (between 0.34 and 0.43 seconds for 3 and 6 cross-aisles) for advanced heuristics.
We also showed that the use of
Concorde with the preprocessing we propose in this paper gives
excellent results even beyond 6 cross-aisles. 
%To find solutions without any guarantee on their quality, the basic heuristics can be used as they  run instantaneously. However, it is often wanted to find the best solution, the one which costs the least. Improved heuristics can be used for this purpose, but they still cannot provide a guarantee on the quality of the solution found and are slower to execute.
%In Theys \textit{et al.} \cite{theys2010using}, the average calculation times of the newly proposed heuristics vary between 0.34 and 0.43 seconds in average for instances with 3 and 6 cross-aisles, while the dynamic program runs in 0.27 seconds for instances of the same size.
%In less than a second, the user can obtain an exact solution for every instances with up to 6 cross-aisles at least.
%We also showed that the use of Concorde with the preprocessing we propose in this paper gives excellent results.
 In average, the computation time is less than 2 seconds to compute an exact solution and this method can be applied on instances of any size.
 The use of heuristics  is
therefore not relevant in our opinion if we are only concerned with
minimizing the distance of the tour, since the dynamic program or Concorde with preprocessing provide an exact solution in a really small amount of time. However, these methods share a weakness with the dedicated heuristics: they cannot easily accommodate side-constraints. Our MILP model is not as efficient as the other methods but it outperforms previously proposed MILP and solve many instances in a reasonable amount of time. Thus, it is relevant to employ it when the user wants an exact solution and has specific requirements that can be modeled with linear constraints.

%The best performances are obtained by the dynamic program with an average computation time of 0.27 seconds (see table \ref{tab:time_comp}).  This exact method is even faster than the heuristics proposed in Theys \textit{et al.} \cite{theys2010using}. In their paper, the average calculation times of the newly proposed heuristics vary between 0.34 and 0.43 seconds in average for the same instances\footnote{The average is done for instances with 3 and 6 cross-aisles. It also includes instances with a decentral depot which has no impact on their and our computational results.}. On these instances, the best original heuristics (combined and aisle-by-aisle) runs instantaneously but with an average gap of 17\%.
%
%The use of Concorde with the preprocessing we propose in this paper gives also excellent results. In particular, this algorithm can be used to solve instances with more than ten cross-aisles, since in this case the dynamic program fails.
%Finally, our model SCFS+ is relevant when the user wants an exact solution and has specific requirements that can be modeled with linear constraints.
%  
\section{Conclusion \label{sec:conclusion}}

We have studied two exact algorithms for the picking problem based on   dynamic programming and mixed integer linear programming.\\
The first approach was previously proposed for warehouses with up to three cross-aisles. We extend it to any number of cross-aisles which has often been mentioned but never done before. This algorithm proves to be extremely efficient for realistic size of warehouses. However, it can not accommodate side constraints such as precedences, flow directions or multiple depots: a MILP is better suited to deal with these requirements.\\
With this in mind, we showed that, on one hand, the compact formulations based on modeling the problem as a TSP do not scale in memory and are unable to solve realistic size instances. On the other hand, a flow based formulation modeling the problem as a Steiner TSP is very sparse but has a weak linear relaxation. As a result, it is also inefficient in practice.\\
Scholz \textit{et al.} proposed a new formulation, with improvement by preprocessing. However, their model is rather complex and does not yet provide convincing results regarding its efficiency when the layout grows\cite{scholz2016new}.
We thus propose a number of improvements by taking advantage of the warehouse structure. These improvements are based on valid inequalities and procedures to significantly reduce the instance's size without loosing optimality. The resulting model remains sparse and exhibits a strong linear relaxation in practice. It outperforms the compact TSP model and solves very large instances almost as efficiently as dedicated TSP approaches on the benchmark studied.
% It outperforms the compact TSP model and proves able to solve very large instances efficiently up to being nearly competitive with dedicated TSP approaches on the benchmark studied.

%
%.Dynamic programming and Concorde are the most effective on all the instances. However, these solvers are dedicated approaches and can only solve the pure version of the problem, which is studied here.  A MILP formulation is however better suited to deal with side constraints, such as precedences, flow directions or multiple depots.

%
% On the contrary, the MILP formulation proposed can easily be embedded in Warehouse Management Systems (WMS) and is flexible, i.e., it can handle constraints such as precedences, flow directions or multiple depots. 
%
%We showed that, on one hand, the compact formulations based on modeling the problem as a TSP do not scale in memory and are unable to solve realistic size instances. On the other hand, a flow based formulation modeling the problem as a Steiner TSP is very sparse but has a weak linear relaxation. As a result, it is also inefficient in practice.
%We thus proposed a number of improvements by taking advantage of the warehouse structure. These improvements are based on valid inequalities and procedures to significantly reduce the instance's size without loosing optimality. The resulting model remains sparse and exhibits a strong linear relaxation in practice. It outperforms the compact TSP model and proves able to solve very large instances efficiently up to being nearly competitive with dedicated TSP approaches on the benchmark studied.

Note finally that some of the ideas proposed here can be applied to improve the efficiency of Concorde. The entire benchmark proposed by Theys, Br{\"a}ysy, Dullaert and Raa \cite{theys2010using} can thus be solved to optimality very efficiently without the need of the heuristics proposed by the same authors.

The analysis of the results showed that the improvement was stronger when the products were stored with a volume policy. It may be a promising track to solve the picking problem jointly with other warehouse issues such as storage policy or batching. Valle \textit{et al.} follow this track in a recent paper \cite{valle2016modelling} as did Won and Olafsson a few years ago\cite{won2005joint}.

\bibliographystyle{plain}

\end{document}